\newcolumntype{C}[1]{>{\centering\arraybackslash}m{#1}}
\newtheorem{thm}{Theorem}
\newtheorem{defn}{Definition}
\algnewcommand\algorithmicinput{\textbf{Input:}}
\algnewcommand\INPUT{\item[\algorithmicinput]}
\algnewcommand\algorithmicoutput{\textbf{Output:}}
\algnewcommand\OUTPUT{\item[\algorithmicoutput]}
\algnewcommand\algorithmicoptional{\textbf{Optional:}}
\algnewcommand\OPTIONAL{\item[\algorithmicoptional]}
\begin{document}


\title{Extending the trapping theorem to provide local stability guarantees for quadratically nonlinear models}%

\author{Mai Peng}
\email{maipeng@uw.edu}
\affiliation{Department of Mechanical Engineering, University of Washington, Seattle, WA, 98195, USA\looseness=-1}

\author{Alan Kaptanoglu}
 \affiliation{Courant Institute of Mathematical Sciences, New York University, New York, NY, 10012, USA \looseness=-1}
\affiliation{Department of Mechanical Engineering, University of Washington, Seattle, WA, 98195, USA\looseness=-1}

\author{Chris Hansen}
\affiliation{ 
Department of Applied Physics and Applied Mathematics, Columbia University, New York, NY, 10027, USA\looseness=-1}%

\author{Jacob Stevens-Haas}
\affiliation{Department of Applied Mathematics, University of Washington,
Seattle, WA, 98195, USA\looseness=-1}

\author{Krithika Manohar}

\author{Steve Brunton}
\affiliation{Department of Mechanical Engineering, University of Washington, Seattle, WA, 98195, USA\looseness=-1}


\begin{abstract}
The Navier Stokes equations (NSEs) are partial differential equations (PDEs) to describe the nonlinear convective motion of fluids and they are computationally expensive to simulate because of their high nonlinearity and variables being fully coupled. Reduced-order models (ROMs) are simpler models for evolving the flows by capturing only the dominant behaviors of a system and can be used to design controllers for high-dimensional systems. However it is challenging to guarantee the stability of these models either globally or locally. Ensuring the stability of ROMs can improve the interpretability of the behavior of the dynamics and help develop effective system control strategies. For quadratically nonlinear systems that represent many fluid flows, the Schlegel and Noack trapping theorem~\cite{Schlegel2015} can be used to check if ROMs are globally stable (long-term bounded). This theorem was subsequently incorporated into system identification techniques that determine models directly from data~\cite{kaptanoglu2021promoting}. While the Schlegel and Noack trapping theorem provides global stability criteria for systems with strictly energy-preserving nonlinearities, many physical systems, including those with inflow/outflow boundary conditions, exhibit weakly relaxed energy-preserving structures. This work introduces two key advances: (1) a theorem establishing analytical stability bounds for linear-quadratic systems under relaxed energy-preserving constraints, explicitly quantifying the local stability radius, and (2) the extended trapping SINDy algorithm, which embeds these theoretical guarantees into data-driven system identification.
By integrating Lyapunov’s direct method with the trapping theorem framework, our approach enables the first provably locally stable models for quadratic dynamics with weakly broken energy-preserving nonlinearities. Several examples are presented to demonstrate the effectiveness and accuracy of the proposed algorithm.
\end{abstract}
\maketitle


\section{\label{sec:Intro}Introduction}
Modeling the full multi-scale spatio-temporal evolution of dynamical systems is often computationally expensive, motivating the use of reduced-order models (ROMs) that capture only the dominant behaviors of a system~\cite{noack2003hierarchy,Noack2011book,carlberg2011efficient,carlberg2013gnat,benner2015survey,Rowley2017arfm}. 
Projection-based model reduction is a common approach for generating ROMs. 
With this technique, a high-dimensional system, such as a spatially discretized set of partial differential equations (PDEs), is projected onto a low-dimensional basis of modes~\cite{Taira2017aiaa,taira2020modal}.  
The projection leads to a computationally efficient system of ordinary differential equations (ODEs) that describes how the mode amplitudes evolve in time~\cite{holmes2012turbulence}. 
However, these models often suffer from stability issues~\cite{Ballarin2015}, causing solutions to diverge in finite-time. In the present work, we refer to models as ``globally stable'' if all trajectories are bounded for all time, and we will formalize this notion in Sec~\ref{sec:ROM}.
The traditional explanation is that the stability issues in Galerkin models derive from truncated dissipative scales, but increasingly there are alternate explanations including fundamental numerical issues with the Galerkin framework, potentially resolved in a Petrov-Galerkin framework, for convection-dominated flows~\cite{grimberg2020stability}.

To determine if a Galerkin model is long-term bounded, Schlegel and Noack~\cite{Schlegel2015} developed a ``trapping theorem'' with necessary and sufficient conditions for long-term model stability for systems that exhibit quadratic, energy-preserving nonlinearities. This theorem can be used as an effective diagnostic for evaluating the stability of reduced-order models. Moreover, the theorem was subsequently incorporated into system identification methods, for producing data-driven models that are provably long-term bounded~\cite{kaptanoglu2021promoting, koike2024energy, liao2024convex,duff2024stability}. The ability to promote global stability guarantees has broad potential for data-driven models, and has already been extended to neural-network-based system identification methods~\cite{Ouala2023}. Recently, Goyal \textit{et al}.\cite{goyal2023guaranteed} extended the trapping theorem in quadratic systems to generalized energy-preserving nonlinearities, achieved through constraints of a special parametrization that guarantee the stability of the learned model by construction. For general quadratic systems, Kramer~\cite{kramer2021stability} discusses the condition for such systems to be locally stable by means of a Lyapunov function and provide better estimates of the radii of the domain of attraction by a novel optimization-based approach. {For systems whose dynamics can be described by more general polynomial functions, the sum-of-squares of polynomials techniques provide a constructive method for generating Lyapunov functions to achieve global stability and control of dynamical systems and ROMs~\cite{chernyshenko2014polynomial, huang2015sum}. However, a sum-of-squares polynomial Lyapunov function for a given system must be searched for and its existence is not guaranteed.}

The trapping theorem assumes that the nonlinear terms do not contribute at all to the dynamical energy evolution. This rather strong assumption may seem somewhat limiting, but actually there are many scenarios under which energy-preserving quadratic nonlinearities can arise. In fluid dynamics, the quadratic nonlinearity represents the convective derivative $(\bm{u}\cdot\nabla)\bm{u}$ in the Navier-Stokes equations (NSEs). This quadratic nonlinearity is energy-preserving for a large number of boundary conditions. Examples include no-slip conditions, periodic boundary conditions~\cite{mccomb1990physics,holmes2012turbulence}, mixed no-slip and periodic boundary conditions~\cite{rummler1998direct}, and open flows in which the velocity magnitude decreases faster than the relevant surface integrals expand (e.g., two-dimensional rigid body wake flows and three-dimensional round
jets)~\cite{schlichting2016boundary}. Inflow and outflow boundary conditions are also common in fluid mechanics, and often only weakly break the energy-preserving structure of the nonlinearities. Unless this constraint is exactly enforced in a numerical scheme, truncation errors and dataset noise will provide additional weak breaking of the nonlinear structure.

In magnetohydrodynamics (MHD), and MHD extensions such as Hall-MHD, there are additional quadratic nonlinearities that are also energy-preserving with common experimental boundary conditions such as a conducting wall~\cite{freidberg2014ideal}, or a balance between dissipation and actuation in a steady-state plasma device~\cite{kaptanoglu2020two,kaptanoglu2021physics}. In this way, the results presented in this work hold for quadratic reduced-order models of convection-dominated fluid flows, Lorentz-force-dominated plasmas, and other quadratic-nonlinearity-dominated dynamical systems. 

\subsection{Contributions of this work}
This work establishes a theorem within the framework of projection-based reduced-order models (ROMs), which guarantees local stability for data-driven models with quadratic nonlinearities. The theorem derives sufficient conditions for the local stability of linear-quadratic systems under weakly relaxed quadratic energy-preserving structures. Furthermore, we demonstrate that the Schlegel and Noack theorem~\cite{Schlegel2015} is recovered as a limiting case by restricting the relaxation of energy-preserving constraints. Formalizing stability under weakly violated energy-preserving nonlinearities is critical for applications involving boundary conditions such as inflow/outflow configurations, which are ubiquitous in fluid systems. Rigorous stability guarantees for this class of systems provide a foundation for designing robust fluid flow control strategies.

We next demonstrate how local stability in data-driven models can be enforced through a modified optimization loss function derived from the theorem. The sparse identification of nonlinear dynamics (SINDy) framework~\cite{brunton2016discover}, known for its utility in parsimonious model discovery, serves as a natural platform for this implementation. We modify the original optimizer based on SINDy for identifying \textit{globally} stable models~\cite{kaptanoglu2021promoting} with minimal adjustments, adapting it to discover \textit{locally} stable systems exhibiting weakly quadratic energy-preserving structures while enabling estimation of the local stability radius. Furthermore, we extend the framework by generalizing quadratic energy-preserving terms through the introduction of a Lyapunov matrix, thereby integrating Lyapunov’s direct method into the Schlegel and Noack trapping theorem.

The rest of the paper is organized as follows: Sec.~\ref{sec:ROM} begins by reviewing energy-preserving quadratic nonlinearities and the Schlegel and Noack trapping theorem, which establishes stability criteria for systems with such structures. In Sec.~\ref{sec:ROM}, we introduce a novel theorem that rigorously quantifies the stability domain for systems where energy-preserving constraints are relaxed—a critical advancement for flows with non-conservative boundary conditions (e.g., open flows). In Sec.~\ref{sec:algorithm}, we develop the extended trapping SINDy algorithm, a significant extension of the original trapping SINDy framework that augments it with local stability guarantees by embedding our theoretical advances into the optimization process. This algorithm is implemented in the open-source PySINDy package~\cite{deSilva2020,Kaptanoglu2022}, enabling reproducible and accessible stable model discovery. Sec.~\ref{sec:results} showcases the algorithm’s ability to recover stable models from noisy data, even under weakened energy-preserving assumptions, across several benchmark systems. Finally, Sec.~\ref{sec:conc} synthesizes the core innovations of this work: (1) the first analytical stability bounds for quadratically nonlinear systems with weakly energy-preserving structure, and (2) a generalizable framework for enforcing stability in data-driven models, bridging Lyapunov theory with the trapping theorem. We conclude with open challenges and future extensions.

\section{\label{sec:ROM}Reduced-order modeling}
In modern scientific computing, a set of governing partial differential equations is typically discretized into a high-dimensional system of coupled ordinary differential equations. 
The state of the system $\bm u(\bm{x},t)\in\mathbb{R}^n$ is defined as a high-dimensional vector that represents the fluid velocity or other set of spatio-temporal fields, for example sampled on a high-resolution spatial grid. For concreteness, in the fluid examples in this work, we will assume that the dynamical flow is evolving according to the incompressible Navier-Stokes equations,  
\begin{subequations}
\label{eq:NSE}
    \begin{align}
    &\nabla \cdot \bm{u} = 0, \\
    &\frac{\partial \bm{u}}{\partial t} = - (\bm{u} \cdot \nabla) \bm{u} + \frac{1}{\rho} \left[  - \nabla p
    + \nu\nabla^2\bm u \right].
\end{align}
\end{subequations}
The density $\rho$ and viscosity $\nu$ are assumed constant, and the pressure is denoted $p$. These equations are expected to accurately approximate incompressible flow of Newtonian fluids with constant density, kinematic viscosity, and inflow/outflow boundary conditions. Other complex fluid flows such as ocean currents, viscoelastic fluids, or ferrofluids are described by models that produce higher order polynomial nonlinearities or trigonometric nonlinear terms, but may be amenable to the quadratic analysis here if regimes exist where higher order nonlinearities are inactive or weak or can be transformed into quadratic nonlinearities~\cite{qian2020lift}.

The goal of a projection-based ROM is to transform this high-dimensional system into a lower-dimensional system of size $r\ll n$ that captures the essential dynamics. 
One way to reduce the set of governing equations to a set of ordinary differential equations is by decomposition into a desired low-dimensional orthogonal basis $\{\bm{\chi}_i(\bm x)\}$ in a process commonly referred to as Galerkin expansion: 
\begin{align}
\label{eq:galerkin_expansion}
    \bm u(\bm{x}, t) = \overline{\bm u}(\bm{x}) + a_i(t) \bm{\chi}_i(\bm{x}).
\end{align}
The usage of Einstein notation is throughout this paper so repeated indices are implicitly summed over. Thus here the sum over $i$ is implied and $i \in \{1, ..., r\}$. Here, $\overline{\bm u}(\bm{x})$ is a base flow which fulfills stationary boundary conditions, $\bm{\chi}_i(\bm x)$ are spatial modes of the expansion for the fluctuation $\bm{u}'(\bm x, t) = \bm{u}(\bm x, t) - \overline{\bm u}(\bm x)$, and $a_i(t)$ describe how the amplitude of these modes vary in time. 
The proper orthogonal decomposition (POD)~\cite{holmes2012turbulence,brunton2019data} is frequently used to obtain the basis, since the modes $\bm{\chi}_i(\bm x)$ are orthogonal and ordered by maximal energy content. 
The governing equations are then Galerkin projected onto the basis $\{\bm{\chi}_i(\bm x)\}$ by substituting Eq.~\eqref{eq:galerkin_expansion} into the PDE of Eq.~\eqref{eq:NSE} and using inner products with $\bm{\chi}_k$ to remove the spatial dependence.  
Orthogonal projection onto POD modes is the simplest and most common procedure, resulting in POD-Galerkin models, although Petrov-Galerkin projection~\cite{carlberg2011efficient,carlberg2013gnat} improves model performance in some cases.

Although POD-Galerkin models can be accurate descriptions of various flows, they are intrusive, in the sense that evaluating the projected dynamics requires evaluating individual terms in the governing equations, such as spatial gradients of the flow fields~\cite{Noack2011book, carlberg2011efficient, carlberg2013gnat, benner2015survey}.
POD-Galerkin models therefore tend to be highly sensitive to factors including mesh resolution, convergence of the POD modes, and treatment of the pressure and viscous terms. For this reason, we pivot to a data-driven approach later in the present work.
%
Lastly, POD-Galerkin models generically suffer from instability issues, which we explore next.


\subsection{Stability definitions}
In this section, necessary background material is introduced to define the problem under discussion in this paper. The goal is to characterize the stability of quadratic dynamical systems with weakly energy-preserving nonlinearities, which involves the use of common analytical methods from basic linear stability analysis to Lyapunov stability. 

We begin with the definition of boundedness. In this paper, we exclusively consider stability in the form of long-term boundedness of dynamical trajectories. Additionally, we will use \textit{globally} long-term bounded to describe dynamical systems for which all trajectories are long-term bounded. \textit{Locally} long-term bounded will refer to a system exhibiting  dynamical trajectories that are bounded if they are initialized within some compact region of $\mathbb{R}^N$.

Consider the first order system
\begin{equation}
\label{eq:general_dynamics}
    \dot{\bm{x}} = \bm f(\bm x)
\end{equation}
where map $f: \mathcal{D} \rightarrow \mathbb R^n$ is locally Lipschitz from {an open and connected set} $\mathcal{D} \subset \mathbb R^n$ onto $\mathbb R^n$. {If such points exist, the equilibrium points or fixed points of the system~\eqref{eq:general_dynamics} are defined as  $\bar{\bm x} \in \mathcal{D}$ such that $f(\bar{\bm x}) = 0$. For convenience, we will state all definitions and theorems for the case when $\bar{\bm x} = 0$, the origin of $\mathbb R^n$. It can be proved that there is no loss of generality in doing so. When discussing the stability of dynamical systems, it is a common practice to use phase space for the intuitive picture of solutions. Now we define a flow on phase space.
\begin{defn}
 \label{def:flow}
The solutions of Eq.~\eqref{eq:general_dynamics} define a flow, $\phi(\bm x, t)$ or $\phi_t(\bm x)$, such that $\phi_t(\bm x)$ is the solution of this differential equation at time $t$ with initial value $\bm x$. Hence,
\begin{equation*}
    \frac{\mathrm{d}}{\mathrm{d} t}\phi(\bm x,t) = f(\phi(\bm x,t)), 
\end{equation*}
for all $t$ such that the solution through $\bm x$ exists and $\phi(\bm x,0) = \bm x$.
\end{defn}
}
{Note that the solution $x(t)$ with $x(0) = x_0$ by this notation is $\phi(x_0,t)$ or $\phi_t(x_0)$. We then give the definitions of terms that describe the stability of dynamical systems that will be discussed in this paper using these notations.}
\begin{defn}
 \label{def:Boundedness}
The solutions of Eq.~\eqref{eq:general_dynamics} are 
\begin{enumerate}
    \item[$\bullet$] {bounded} if there exists a constant $\rho > 0$ that is independent of the initial time $0$, and for every $c \in (0, \rho)$, there is $\mu = \mu(c) > 0$ such that
\begin{equation}
\label{eq:boundedness}
    \|{\bm x}(0) \| \leq c \  \Rightarrow \ \|{\bm x}(t) \| \leq \mu,\ \forall t \ge 0,
\end{equation}
    \item[$\bullet$] {globally bounded} if~\eqref{eq:boundedness} holds for arbitrarily large $\rho$.
    \item[$\bullet$] {If there exists some $\eta$ such that $\eta \geq \mu(c)$ for every $ c \in (0, \rho)$ in the definition above, we can define an upper bound stability region, i.e. a ball center at the origin with radius $\eta$, since \begin{equation}
\label{eq:upperbound}
    \|{\bm x}(0) \| \leq c \  \Rightarrow \ \|{\bm x}(t) \| \leq \eta,\ \forall t \ge 0,
\end{equation} 
meaning that there is an upper bound on how far trajectories can ever be from the origin.
    }
    \item[$\bullet$] 
    {We will also use the term ``domain of boundedness'' to denote the set of all $x(0) = x_0$ such that the solutions $x(t)$ of a given dynamical system are bounded.} 
\end{enumerate}
\end{defn}
%
While local or global boundedness is very general, we will show that for the case of quadratically nonlinear systems, boundedness can be addressed by considering a more specialized class of stability via trapping regions. We will use the Lyapunov function to help define the trapping region, so first we give the definition of the Lyapunov function.

\begin{defn}
 \label{def:Lyapunov_func}
Let $V: \mathbb R^n \rightarrow \mathbb R$ be a continuously differentiable function. Then we can define the derivative of $V(\bm x)$ along trajectories with respect to time~\cite{glendinning1994stability,SWINNERTONDYER2001},
\begin{equation*}
    \dot V = \frac{\mathrm d V}{\mathrm{d}t} = \bm f(\bm x) \cdot \nabla V.
\end{equation*}
Then $V$ is
\begin{enumerate}
    \item[$\bullet$] a Lyapunov function for the value of $c\in \mathbb R_{>0}$ if $\dot V(\bm x) \le 0$ at all points of $V(\bm x) = c$,
    \item[$\bullet$] a strict Lyapunov function for $c$ if $\dot V(\bm x) < 0$ at all points of $V(\bm x) < c$.
\end{enumerate}
\end{defn}

An important case is when there exists some $a,b \in \mathbb R$ such that $V$ is Lyapunov or strict Lyapunov for all $c \in [a,b]$ and if so we denote the set $\{\bm x \in \mathbb R^n \vert V(\bm x) \le a\}$ by $\Omega_{V,a}$. Then each trajectory starting inside $\Omega_{V,b}$ either enters the set $\Omega_{V,a}$ or tends to the set given by $\{\bm x \in \mathbb R^n \vert \dot{V}(\bm x) = 0\}$. We will later see that if $\Omega_{V,a}$ is bounded, it is a trapping region. Now we give the definition of a trapping region.
\begin{defn}
Consider the dynamical system of~\eqref{eq:general_dynamics} and let $\phi_t(\bm x)$ denote a position of a solution of the system at time $t$ that started at $\bm x \in \mathcal{D}$. Then a set $\mathcal{N} \subset \mathcal{D}$ is
\begin{enumerate}
    \item[$\bullet$] a trapping region if it is compact and 
    \begin{equation}
    \label{def:trapping_region}
    \phi_t(\mathcal{N}) \subset \mathrm{int}(\mathcal{N}),\  \forall t>0,
    \end{equation}
    where $\mathrm{int}(\cdot)$ denotes the interior of a set,
    \item[$\bullet$] an attracting trapping region if there exists a compact set $S \subset \mathcal{D}$ and time $T$ such that $\mathcal{N} \subset S$ is a trapping region and
    \begin{equation}
    \phi_t(S\setminus \mathcal{N}) \subset \mathrm{int}(\mathcal{N}),\  \forall t>T,
    \label{def:attracting_trapping_region}
    \end{equation}
    \item[$\bullet$] a monotonically attracting trapping region if $\mathcal{N}$ is an attracting trapping region and there exists a Lyapunov function $V$ such that
    \begin{equation}
    \label{def:mono_attract_TR}
    \dot{V}(\bm x) < 0,\  \forall \bm x\in \mathcal{D} \setminus \mathcal{N}.
    \end{equation}
\end{enumerate}
\end{defn}
An attracting trapping region is one in which trajectories in $\mathcal{D} \setminus \mathcal{N}$ eventually fall into $\mathcal{N}$, while a monotonically attracting trapping region is one in which trajectories in $\mathcal{D} \setminus \mathcal{N}$ are \textit{always} falling towards $\mathcal{N}$. Also, a trapping region is not exactly an attractor, as a trapping region is a neighborhood of an attractor $\mathcal A$ that satisfies $\phi_t(\mathcal{A}) \subset \mathcal{A}$ for all $t$ and no proper subset of $\mathcal A$ is so. Fig.~\ref{fig:trapping_sketch} illustrates this difference. Consider $V$ a Lyapunov function and each dashed closed surface $V = V(\bm x) = c_i\ (i=1,2,3,4)$ is a Lyapunov surface with $c_1 > c_2 > c_3 > c_4$. For each Lyapunov surface outside $\bm{\mathcal{N}}$, $\bm \Gamma_2$ only passes through at most once, while $\bm \Gamma_1$ crosses $V = c_3$ more than once before each enters into $\bm {\mathcal{N}}$. If we consider the Lyapunov function as a form of energy, along $\bm \Gamma_2$, the energy is always decreasing before entering $\bm{\mathcal{N}}$. 
    If all trajectories starting outside $\bm{\mathcal{N}}$ behave the same as $\bm \Gamma_2$, then $\bm{\mathcal{N}}$ is a monotonically attracting trapping region, otherwise an attracting trapping region. These definitions form the basis for our analysis following in the next a few sections.
\begin{figure}
    \centering
    \includegraphics[width=0.8\linewidth]{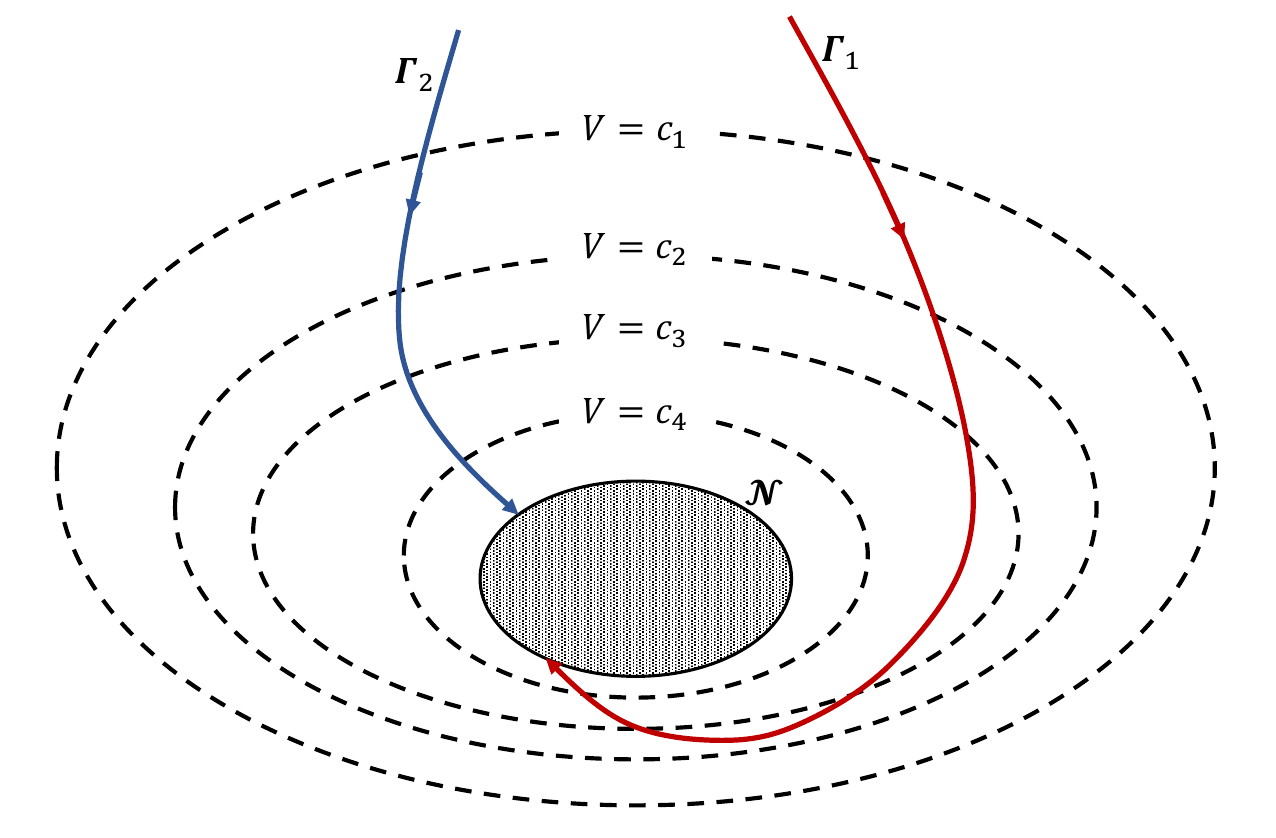}
    \caption{Sketch of an attracting trapping region. $\bm \Gamma_1$ (in red, solid) and $\bm \Gamma_2$ (in blue, solid) are two trajectories starting outside an attracting trapping region $\bm{\mathcal{N}}$ where all trajectories will eventually enter in and stay there forever.}
    \label{fig:trapping_sketch}
\end{figure}
\subsection{\label{sec:EPN}Flows with energy-preserving nonlinearities}

Now, if the governing system of partial differential equations for $\bm u(\bm{x},t)$ are at most quadratic in nonlinearity, as for many fluids, plasmas, and other dynamical systems, Galerkin projection onto the governing equations produces the following system of ODEs for the set of temporal functions $ a_i(t)$,
\begin{align}
\label{eq:Galerkin_model}
\dot{a}_i(t) &= E_i+ L_{ij}a_j + Q_{ijk}a_ja_k.
\end{align}
$E_i$, $L_{ij}$, and $Q_{ijk}$ are tensors of time-independent coefficients, obtained from spatial inner products between the $\bm{\chi}_i(\bm x)$ and the spatiotemporal operators that define the model dynamics. We have assumed that the pressure gradient term in Eq.~\eqref{eq:NSE} has been removed as in Rowley \textit{et al}.~\cite{Rowley2004pd}. Moreover, Eq.~\eqref{eq:Galerkin_model} can be proved that it is locally Lipschitz since it is continuously differentiable over any compact domain in a Euclidean space; therefore the existence and uniqueness of the initial value problem are guaranteed. The class of systems considered for the original trapping theorem are those with energy-preserving nonlinearity, which is defined as
\begin{align}
\label{eq:energy_preserving_nonlinearity_full}
    Q_{ijk}a_ia_ja_k = 0, 
\end{align}
or equivalently, for all $i,j,k \in \{1,...,r\}$,
\begin{align}
\label{eq:energy_preserving_nonlinearity}
Q_{ijk} + Q_{jik} + Q_{kji}= 0.
\end{align} 
The $Q_{ijk}$ are symmetric in swapping $j$ and $k$ without loss of generality. To see why this condition preserves the energy, consider $\overline{\bm u}(\bm x)$ being the mean flow and we can introduce the standard kinetic energy $K$,
\begin{align}
    K = \frac{1}{2}\langle \bm u, \bm u \rangle .
\end{align}
Then the change of the kinetic energy in time is,
\begin{align}
\label{eq:energy_deriv_pod}
    \dot{K} &= \langle \bm u, \dot{\bm u} \rangle = \langle \overline{\bm u}(\bm{x}) +  a_i(t) \bm{\chi}_i(\bm{x}), \dot{a}_j(t) \bm{\chi}_j(\bm{x}) \rangle \\ \notag
    &= (a_i - m_i)\dot{a}_i = (a_i - m_i)\left(E_i+ L_{ij}a_j + Q_{ijk}a_ja_k\right), \\
    \label{eq:m_fixed}
    m_j &\equiv - \langle \overline{\bm u}(\bm x), \bm \chi_j(\bm x) \rangle.
\end{align}
The only cubic term in the temporal modes in Eq.~\eqref{eq:energy_deriv_pod} is precisely $Q_{ijk}a_ia_ja_k$ and by Eq.~\eqref{eq:energy_preserving_nonlinearity_full} it vanishes. The eradication of any cubic terms in $\bm a$, as well as the shift of the energy via $\bm m$ from the presence of a steady-state contribution $\overline{\bm u}(\bm{x})$, are critical components for the definition of a suitable Lyapunov function, as is illustrated below.
Note that this analysis is unchanged if the modes $\bm \chi_j$ are not orthogonal, since this only introduces a mass matrix~\cite{rempfer1994dynamics} $M_{ij} = \langle \bm \chi_i, \bm \chi_j \rangle$, which is symmetric by construction and leaves the energy evolution qualitatively unchanged if it is positive definite~\cite{kramer2021stability}.

\subsection{\label{sec:trapping}Schlegel and Noack trapping theorem}

The Schlegel and Noack trapping theorem provides necessary and sufficient conditions for energy-preserving, effectively nonlinear, quadratic systems {to exhibit a monotonically trapping region, $\bm B(\bm{m},R_m)$, a ball centered at $\bm{m} \in \mathbb{R}^r$ with radius $R_m$. } 
Outside this region the rate of change of energy $K$ is negative everywhere, producing a Lyapunov function that renders this system globally stable.  
The trapping theorem begins by recentering the origin by a now arbitrary constant vector $\bm{m}$ so that we can define a shifted energy  expressed in terms of the shifted state vector $\bm{y}(t)=\bm{a}(t)-\bm{m}$,
\begin{align}
\label{eq:K}
    K = \frac{1}{2}y_i^2.
\end{align}
Taking a derivative and substituting in Eq.~\eqref{eq:Galerkin_model} produces Eq.~\eqref{eq:energy_deriv_pod}, which can be written as
\begin{subequations}
    \begin{align}
\label{eq:Kdot}
\dot{K} &= A_{ij}^Sy_iy_j + d_iy_i, \\
\label{eq:def_AS_LS_dm}
    A_{ij}^S &= L^S_{ij} + \left(Q_{ijk} + Q_{jik} \right)m_k, \\  L_{ij}^S &= \frac{1}{2}(L_{ij} + L_{ji}),\\
\label{eq:bias_d}
    d_i &= E_i + L_{ij}m_j + Q_{ijk}m_im_j.
\end{align}
\end{subequations}

Exploiting that $Q_{ijk}$ is symmetric in the last two indices and energy-preserving, $\bm{A}^S$ can also be expressed as,
\begin{subequations}
\label{eq:preservingAs}
 \begin{eqnarray}
     A_{ij}^S &=& L^S_{ij} - Q_{kij}m_k, \ \  \text{or}\\
     \bm{A}^S &=& \bm{L}^S - \bm m^T\bm Q.
 \end{eqnarray}
\end{subequations}
Then the trapping theorem may now be stated as:
\begin{thm}
\label{th:trapping_theorem}
Consider an effectively quadratically nonlinear system with energy-preserving
nonlinearity. There exists a monotonically trapping ball $\bm B(\bm{m},R_m)$ if and only if the real, symmetric matrix $\bm{A}^S$ is negative definite (\textit{Hurwitz}) with eigenvalues $\lambda_r \leq \cdots \leq \lambda_1 < 0$; the radius is then given by $R_m = \|\bm{d}\|_2/|\lambda_1|$.\footnote{If a system is long-term bounded (not necessarily exhibiting a monotonically trapping region) and effectively nonlinear,  the existence of an $\bm{m}$ producing a negative \emph{semi}definite $\bm{A}^S$ is still guaranteed.} 
\end{thm}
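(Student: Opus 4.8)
The plan is to treat the shifted energy $K=\tfrac12\|\bm y\|^2$ from Eq.~\eqref{eq:K} as the candidate Lyapunov function and to reduce the entire statement to controlling the sign of $\dot K$ from Eq.~\eqref{eq:Kdot}, namely $\dot K=\bm y^\top\bm A^S\bm y+\bm d^\top\bm y$, in which the energy-preserving property has already annihilated every cubic term. Because the sublevel sets $\Omega_{K,c}=\{\,\bm a : \tfrac12\|\bm a-\bm m\|^2\le c\,\}$ are exactly the balls centered at $\bm m$, the task splits into showing that negative definiteness of $\bm A^S$ makes $\dot K<0$ strictly outside one such ball (sufficiency) and that the existence of a monotonically trapping ball forces $\bm A^S\prec0$ (necessity).

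For sufficiency I would assume $\bm A^S\prec0$ and diagonalize the symmetric matrix to obtain the sharp Rayleigh bound $\bm y^\top\bm A^S\bm y\le\lambda_1\|\bm y\|^2$, with $\lambda_1<0$ the largest eigenvalue; combined with Cauchy--Schwarz this gives
\begin{equation*}
\dot K\;\le\;\lambda_1\|\bm y\|^2+\|\bm d\|_2\,\|\bm y\|\;=\;\|\bm y\|\bigl(\lambda_1\|\bm y\|+\|\bm d\|_2\bigr).
\end{equation*}
Since $\lambda_1<0$, the right-hand side is strictly negative precisely when $\|\bm y\|>\|\bm d\|_2/|\lambda_1|=R_m$, so $\dot K<0$ on the complement of the closed ball $\bm B(\bm m,R_m)$. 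Identifying $a=\tfrac12 R_m^2$, the same inequality shows $\dot K\le0$ on every level set $\{K=c\}$ with $c\ge a$, so $K$ is a Lyapunov function for all $c\in[a,b]$ for any $b>a$; the sublevel-set argument recorded just before the trapping-region definition then makes the bounded set $\Omega_{K,a}=\bm B(\bm m,R_m)$ a trapping region, and the strict inequality $\dot K<0$ on its exterior supplies condition~\eqref{def:mono_attract_TR}, upgrading it to a monotonically attracting trapping ball of radius $R_m$.

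For necessity I would argue by contraposition. Suppose $\bm A^S$ is not negative definite; then it possesses an eigenvalue $\lambda\ge0$ with unit eigenvector $\bm v$, and evaluating $\dot K$ along the ray $\bm y=s\bm v$ yields $\dot K=\lambda s^2+(\bm d^\top\bm v)\,s$. If $\lambda>0$ this tends to $+\infty$ as $s\to+\infty$, while if $\lambda=0$ it equals $(\bm d^\top\bm v)s$, which is nonnegative for $s$ of one sign and identically zero when $\bm d^\top\bm v=0$. In every case $\dot K$ fails to be negative at points of arbitrarily large norm, so no ball centered at $\bm m$ can satisfy the monotonic condition $\dot K<0$ on its exterior; hence a monotonically trapping ball cannot exist unless $\bm A^S\prec0$.

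I expect the main obstacle to lie not in the two sign estimates, which are short, but in cleanly bridging from the pointwise inequality $\dot K<0$ outside $\bm B(\bm m,R_m)$ to the set-theoretic definition of a monotonically attracting trapping region: one must confirm boundedness and the strict inclusion $\phi_t(\bm B)\subset\mathrm{int}(\bm B)$ for $t>0$, which relies on trajectories being unable to touch the boundary sphere from inside, and this is exactly where invoking the previously established sublevel-set machinery is essential rather than re-deriving it. The second delicate point is the $\lambda=0$ case of the necessity argument, since it sits precisely on the boundary between negative definiteness and negative \emph{semi}definiteness; carefully ruling out monotone decrease there, rather than mere nonincrease, is what distinguishes the monotonically trapping conclusion from the weaker long-term-bounded scenario noted in the footnote.
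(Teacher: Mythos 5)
Your proposal is sound, but note that the paper itself never proves Theorem~\ref{th:trapping_theorem}: it is imported from Schlegel and Noack~\cite{Schlegel2015}, and the only derivation the paper supplies is the sufficiency direction plus the radius formula, recovered in Sec.~\ref{subsec:interp} as the $\epsilon_Q\to 0$ limit (equivalently the $\epsilon_Q=0$ case) of Theorem~\ref{thm:trapping_radius}, where the bound $\dot v(\bm z) \leq \|\bm z\|_2\left(\|\tilde{\bm d}\|_2 + \lambda_1\|\bm z\|_2\right)$ yields $\dot v<0$ for $\|\bm z\|_2 > \|\tilde{\bm d}\|_2/|\lambda_1|$. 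Your sufficiency argument---the Rayleigh bound plus Cauchy--Schwarz applied to $\dot K=\bm y^T\bm A^S\bm y+\bm d^T\bm y$---is exactly this computation with $\bm P=\bm I$, so that half coincides with the paper's route; the boundary technicality you flag (the strict inclusion $\phi_t(\bm B)\subset\mathrm{int}(\bm B)$) is indeed what the sublevel-set remark preceding the trapping-region definition is for.

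The necessity half is material the paper omits entirely, deferring to the citation, so there you go beyond it, and the eigenvector-ray contraposition is the right idea. One genuine caveat: what you actually exclude is the specific choice $V=K$. Under the paper's own definition of a monotonically attracting trapping region, which asks only for the existence of \emph{some} Lyapunov function $V$ with $\dot V<0$ outside $\mathcal{N}$, ruling out $K$ does not complete the contraposition; another $V$ could in principle certify monotone attraction. The argument is airtight only under the Schlegel--Noack convention that monotonic attraction of the ball $\bm B(\bm m,R_m)$ means monotone decay of the distance to its center (equivalently of $K$), which is clearly the intended reading here---the sentence preceding the theorem ties the Lyapunov function to $K$---but you should state that identification explicitly. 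Finally, observe that your fixed-$\bm m$, pointwise necessity argument nowhere uses effective nonlinearity; that hypothesis enters only in the footnote's weaker claim (long-term boundedness implies a negative \emph{semi}definite $\bm A^S$ for some $\bm m$), which neither you nor the paper proves.
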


In practice, the goal is then to find an origin $\bm{m}$ so that the matrix $\bm{A}^S$ is \textit{Hurwitz}, guaranteeing a trapping region and global stability. {Moreover, we may want a tight bound on the size of this trapping ball, which can be achieved by varying $\bm m$ to reduce $R_m$. If $R_m \gg 1$, the existence of a trapping ball tells us relatively little about the long-time dynamical trajectories except for the guarantee of boundedness.} There is also a caveat for systems that do not exhibit effective nonlinearity; effective nonlinearity means that there is no linear subspace in which a trajectory starts and remains indefinitely, so that the quadratic nonlinearities $Q_{ijk}y_jy_k$ are never activated. 

%
Without effectively quadratic terms, only the sufficiency in Thm~\ref{th:trapping_theorem} holds; if we can find an $\bm{m}$ so that $\bm{A}^S$ is \textit{Hurwitz}, the system exhibits a trapping region. However, such systems can be globally stable without admitting such an $\bm{m}$. 
Even when the requirements of the trapping theorem are not fully satisfied, system identification work that incorporates this stability theorem~\cite{kaptanoglu2021promoting} still appears to produce models with improved stability properties. The same reference contains an extended discussion of the circumstances under which this theorem holds. 

\subsection{\label{sec:stability_analysis}Analytical estimates of the local stability of linear-quadratic systems}

In this section, we use Lyapunov's direct method to explore the stability of a linear-quadratic system of the form of Eq.~\eqref{eq:Galerkin_model}. We consider the situation where the nonlinearities are no longer exactly energy-preserving for a linear-quadratic system. However, we assume that the breaking of the energy-preserving nonlinearity is bounded in Frobenius norm by a small amount $\epsilon_Q$, i.e.,
\begin{align}
    \| Q_{ijk} + Q_{jik} + Q_{kji} \|_F \leq \epsilon_Q,
\end{align}
which means that it slightly breaks the energy-preserving property of the quadratic term of the dynamical system. Hence, the cubic term in $\dot K$ does not vanish in this case. Based on this constraint, consider the linear-quadratic system,
\begin{align}
    \label{eq:lqsystem_Q}
    \dot{\bm a} &= \bm E + \bm L\bm a + \bm Q\bm{aa},
\end{align}
with state variable $\bm a = \bm a(t) \in \mathbb R^r$, bias term $\bm E \in \mathbb R^r$, matrix $\bm L \in \mathbb R^{r\times r}$ and third-order tensor $\bm Q \in \mathbb R^{r\times r\times r}$. For generalization of Lyapunov's direct method, we introduce an arbitrary state $\bm m$ as the new shifted origin, so that the dynamical system expressed in this new translated coordinate $\bm y = \bm a - \bm m$ is,
\begin{align}
    \label{eq:c_L_Q^System}
    \dot{\bm y} = \bm d + \bm A \bm y + \bm Q\bm y \bm y,
\end{align}
with $\bm d$ defined in Eq.~\eqref{eq:bias_d} and
\begin{align}
    A_{ij} = L_{ij} + (Q_{ijk} + Q_{ikj})m_k.
\end{align}

For the convenience of applying Lyapunov stability analysis, we now introduce a transformation
\begin{equation}
    \label{eq:transform_T}
    \mathcal{T}: \mathbb{R}^{r \times r \times r} \longrightarrow \mathbb{R}^{r \times r^2},
\end{equation}
where the operator $\mathcal{T}$ vectorizes the last two indices of the third order tensor with dimension $r\times r\times r$ by stacking the columns of the corresponding matrix to get a new matrix with the dimension $r\times r^2$. Assume $\bm H = \mathcal{T}(\bm Q)$, then the system of Eq.~\eqref{eq:lqsystem_Q} can be written as,
\begin{align}
\label{eq:shifted_dynamical^System}
    \dot{\bm y} &= \bm d + \bm A\bm y + \bm H(\bm{y\otimes y}),
\end{align}
where $\otimes$ denotes the standard Kronecker product. Notice that we previously assume that the $Q_{ijk}$ are symmetric in the last two indices, hence $\bm H$ is symmetric, i.e., $\bm H(\bm y_1 \otimes \bm y_2) = \bm H(\bm y_2 \otimes \bm y_1)$ for arbitrary $\bm y_1, \bm y_2 \in \mathbb R^r$, which does not change the dynamics~\cite{Two-SidedProjectionMethods}.
To generalize the discussion beyond the shifted energy, we define a Lyapunov function $v(\bm y)$ given a positive definite matrix $\bm P$,
\begin{align}
    \label{eq:lyapunov_function}
    v(\bm y) &= \frac{1}{2} \|\bm y\|^2_{\bm P} =  \frac{1}{2} \langle \bm y, \bm y \rangle_{\bm P}
    = \frac{1}{2}\bm y^T\bm P \bm y,
\end{align}
with the time derivative of $v(\bm y)$
\begin{align}
\label{eq:vdot}
    \dot v(\bm y)
    &= \bm y^T\bm P  \dot{\bm y}.
\end{align}
{This matrix $P$ is any positive definite matrix that forms the equation~\cite{khalil2002nonlinear}
\begin{align*}
    \label{eq:Lyapunov_eq}
    \bm A^T\bm P + \bm P \bm A + \bm M = 0,
\end{align*}
where $A$ is \textit{Hurwitz} and $\bm M$ is positive definite. We call such positive definite matrix $\bm P$ the Lyapunov matrix.} Importantly, the traditional trapping theorem uses the Lyapunov matrix $\bm P = \bm I$, the identity, but for two-dimensional fluid flows the enstrophy is an invariant with $\bm P \neq \bm I$ and it can also be immediately used in the following results. Similar invariants exist in MHD, and more general Lyapunov matrices $\bm P$ may exist for certain flows. Section~\ref{sec:von_karman_enstrophy} will illustrate provably locally-stable, data-driven models of the von Karman vortex street using the enstrophy version of the various trapping theorems described in the present work.

Since the domain of boundedness $\bm \Omega_{\bm m}$ can be complicated geometrically, we consider closed norm balls $\textbf B \subseteq \bm \Omega_{\bm m}$ as the candidates of regions where the solutions of Eq.~\eqref{eq:shifted_dynamical^System} are bounded,{
\begin{align*}
    \textbf B_{\bm y}(R) = \{\bm y \in \mathbb R^r\ |\ \|\bm y\|_{\bm P}\le R\}, \quad \|\bm y\|_{\bm P} = \sqrt{\bm y^T\bm P \bm y}.
\end{align*}}
{Note that
\begin{equation}
\label{eq:shifted_lyapunov}
    v(\bm y) = \frac{1}{2} \bm{y}^T\bm{P}\bm y = \frac{1}{2}\bm z^T\bm z,\quad \bm z=\bm P^{\frac{1}{2}} \bm y
\end{equation}
where $\bm P^{\frac{1}{2}}$ and its inverse exist (and are also symmetric positive-definite) because $\bm P$ is symmetric positive-definite. It follows that
\begin{align}
\label{eq:shifted_ball}
    \textbf B_{\bm y}( R) = \textbf B_{\bm z}( R) = \left\{\bm z \in \mathbb R^r\ \vert\ \|\bm z\|_{2}\le  R \right\}.
\end{align}}
We now provide a lower-bound for the domain of boundedness by the set $\textbf B_{\bm y}$. The task then becomes to make this bound as tight as possible, which is, to find $\sup{ R}$ as the estimate of the local stability radius. This leads to the following theorem.
\begin{thm}
\label{thm:trapping_radius}
    Regarding the system \eqref{eq:c_L_Q^System} and a Lyapunov matrix $\bm P$, there exists a trapping region if the following statements are true:
    \begin{itemize}
        \item[a.] 
        {$\bm{\tilde{A}}^S = \frac{1}{2}\left(\bm P^{\frac{1}{2}}\bm A \bm P^{-\frac{1}{2}}+\bm P^{-\frac{1}{2}} \bm A^T \bm P^{\frac{1}{2}}\right) $ is Hurwitz, i.e. $\lambda_r(\bm{\tilde{A}}^S) \le ... \le \lambda_1(\bm{\tilde{A}}^S)< 0$.
        \item[b.] $\lambda^2_1 - \frac{4\epsilon_Q}{3}\|\bm{\tilde{d}}\|_2 \ge 0$, where
            $\epsilon_Q = \|\bm {\tilde{H}}_0\|_F$, $\bm {\tilde{H}}_0 = \bm P^{-\frac{1}{2}}\mathcal{T}(\tilde Q_{ijk} + \tilde Q_{ijk} + \tilde Q_{kji}) (\bm P^{-\frac{1}{2}}\otimes \bm P^{-\frac{1}{2}})$, $\bm {\tilde Q} = \bm{PQ}$ and $\tilde{\bm d} = \bm P^{\frac{1}{2}}\bm d$. }
    \end{itemize}
    Then the smallest radius of the trapping region $\bm{\mathrm{B}}_{\bm y}(\rho_-)$ is given by
    \begin{align}\label{eq:rho_minus}
    \rho_- = \frac{3|\lambda_1|}{2\epsilon_Q} \left( 1 - \sqrt{1 - \frac{4\epsilon_Q}{3\lambda_1^2}\|\bm{\tilde{d}}\|_2}  \right),
    \end{align}
    while all trajectories starting inside $\bm{\mathrm{B}}_{\bm y}(\rho_+)$ are bounded where \begin{align}\label{eq:rho_plus}
    \rho_+ = \frac{3|\lambda_1|}{2\epsilon_Q} \left( 1 + \sqrt{1 - \frac{4\epsilon_Q}{3\lambda_1^2}\|\bm{\tilde{d}}\|_2} \right).\end{align}
\end{thm}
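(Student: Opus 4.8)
The plan is to apply Lyapunov's direct method with the candidate $v(\bm y)=\tfrac12\bm y^T\bm P\bm y$ and show that its time derivative along trajectories is strictly negative on a spherical shell, which pins a trapping region between two concentric $\bm P$-balls. First I would compute
\begin{equation*}
\dot v(\bm y)=\bm y^T\bm P\dot{\bm y}=\bm y^T\bm P\bm d+\bm y^T\bm P\bm A\bm y+\bm y^T\bm P\bm H(\bm y\otimes\bm y),
\end{equation*}
and then pass to the whitened coordinate $\bm z=\bm P^{1/2}\bm y$, so that $v=\tfrac12\|\bm z\|_2^2$ and $\bm{\mathrm B}_{\bm y}(R)=\bm{\mathrm B}_{\bm z}(R)$ by Eq.~\eqref{eq:shifted_ball}. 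Under this substitution the three terms become a linear form $\bm z^T\tilde{\bm d}$ with $\tilde{\bm d}=\bm P^{1/2}\bm d$, a quadratic form $\bm z^T(\bm P^{1/2}\bm A\bm P^{-1/2})\bm z$, and a cubic form. Because only the symmetric part of a matrix contributes to a quadratic form, the middle term equals $\bm z^T\bm{\tilde A}^S\bm z$, so the whole problem reduces to estimating a cubic polynomial in $R=\|\bm z\|_2$.

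The heart of the argument is bounding the three pieces. The linear piece is controlled by Cauchy--Schwarz, $\bm z^T\tilde{\bm d}\le\|\tilde{\bm d}\|_2\,R$. The quadratic piece uses the Rayleigh bound $\bm z^T\bm{\tilde A}^S\bm z\le\lambda_1 R^2=-|\lambda_1|R^2$, which is where hypothesis (a) (Hurwitz $\bm{\tilde A}^S$, largest eigenvalue $\lambda_1<0$) enters. The cubic piece is the delicate one: writing its contribution as $\tilde Q_{ijk}y_iy_jy_k$ with $\bm{\tilde Q}=\bm{PQ}$, I would symmetrize over the summation indices and use that $\tilde Q$ is symmetric in its last two slots to obtain $\tilde Q_{ijk}y_iy_jy_k=\tfrac13(\tilde Q_{ijk}+\tilde Q_{jik}+\tilde Q_{kji})y_iy_jy_k$; that is, only the fully symmetrized energy-breaking tensor survives. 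Transforming to $\bm z$ identifies this with $\tfrac13\bm z^T\bm{\tilde H}_0(\bm z\otimes\bm z)$, and then $|\bm z^T\bm{\tilde H}_0(\bm z\otimes\bm z)|\le\|\bm{\tilde H}_0\|_{\mathrm{op}}\,\|\bm z\otimes\bm z\|_2\,\|\bm z\|_2\le\|\bm{\tilde H}_0\|_F\,R^3=\epsilon_Q R^3$, using $\|\bm z\otimes\bm z\|_2=\|\bm z\|_2^2$ and operator-norm $\le$ Frobenius-norm. Collecting everything gives
\begin{equation*}
\dot v\le R\,g(R),\qquad g(R)=\tfrac13\epsilon_Q R^2-|\lambda_1|R+\|\tilde{\bm d}\|_2.
\end{equation*}

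Finally I would analyze the upward-opening quadratic $g$. Its discriminant is nonnegative exactly under hypothesis (b), and its two roots are precisely $\rho_-$ and $\rho_+$ of Eqs.~\eqref{eq:rho_minus}--\eqref{eq:rho_plus}. Hence $g(R)<0$, and therefore $\dot v<0$, on the open shell $\rho_-<\|\bm z\|_2<\rho_+$. For any radius $R$ in this shell the ball $\bm{\mathrm B}_{\bm y}(R)=\{v\le\tfrac12 R^2\}$ is compact with $\dot v<0$ on its boundary, so trajectories cross that boundary strictly inward, making it a trapping region; taking $R\downarrow\rho_-$ identifies $\bm{\mathrm B}_{\bm y}(\rho_-)$ as the smallest such ball. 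Any trajectory started with $\|\bm z(0)\|_2<\rho_+$ has $v$ nonincreasing until it reaches $\bm{\mathrm B}_{\bm y}(\rho_-)$ and hence remains inside $\bm{\mathrm B}_{\bm y}(\rho_+)$ for all time, giving the claimed domain of boundedness. The main obstacle is the cubic estimate: getting the symmetrization factor $\tfrac13$ right, correctly threading the $\bm P^{\pm1/2}$ factors through the Kronecker structure so the bound is governed by $\epsilon_Q=\|\bm{\tilde H}_0\|_F$ rather than a cruder constant, and justifying the operator-versus-Frobenius step. The remaining root analysis and trapping-region conclusion are routine.
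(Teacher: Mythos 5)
Your proposal is correct and follows essentially the same route as the paper's proof: the Lyapunov function $v=\tfrac12\bm y^T\bm P\bm y$, the whitening $\bm z=\bm P^{1/2}\bm y$, the symmetrization of the cubic term with the factor $\tfrac13$ so that only the energy-breaking tensor $\bm{\tilde H}_0$ enters, the Cauchy--Schwarz/Rayleigh/Frobenius bounds, and the root analysis of the resulting quadratic in $\|\bm z\|_2$ yielding $\rho_\pm$. Your closing argument that sublevel balls with $\dot v<0$ on the boundary are trapping regions is in fact spelled out slightly more carefully than in the paper, which stops at establishing $\dot v<0$ on the shell $\rho_-<\|\bm z\|_2<\rho_+$.
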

\begin{proof}
\label{prf:trapping_radius}
The time derivative of the Lyapunov function given by Eq.~\eqref{eq:vdot} is {
    \begin{align*}
    \dot v(\bm y) &= \bm y^T\bm P  \dot{\bm y}\\ 
    &= \bm{d}^T\bm P\bm y + \frac{1}{2}\bm y^T(\bm P\bm {A} + \bm A^T \bm P)\bm y + \bm y^T\bm P\bm H(y\otimes y)
\end{align*}
We can then introduce $\tilde Q'_{ijk} = \tilde Q_{ijk} + \tilde Q_{jki} + \tilde Q_{kij}$, and $\bm {H}_0 = \mathcal{T}(\tilde {\bm Q}')$. See that 
\begin{align*}
    \bm y \bm{H}_0(\bm y\otimes \bm y) &= \sum_{i,j,k}\tilde Q'_{ijk}y_iy_jy_k \\
    &= \sum_{i,j,k}(\tilde Q_{ijk} + \tilde Q_{jki} + \tilde Q_{kij})y_iy_jy_k \\
    &= 3\sum_{i,j,k}\tilde Q_{ijk}y_iy_jy_k \\
    &= 3\bm y^T\bm P\bm H(y\otimes y)
\end{align*}
Then by Eq.~\eqref{eq:shifted_lyapunov}, $\dot v(\bm y)$ can also be written as
\begin{subequations}
    \begin{align*}
        \dot v(\bm z) &= \bm{\tilde{d}}^T\bm z + \bm z\bm{\tilde{A}}^S \bm z + \frac{1}{3}\bm z^T\bm P^{-\frac{1}{2}}\bm H_0 (\bm P^{-\frac{1}{2}}\otimes \bm P^{-\frac{1}{2}})(\bm z \otimes\bm z)\\
        & = \bm{\tilde{d}}^T\bm z + \bm z\bm{\tilde{A}}^S \bm z + \frac{1}{3}\bm z^T\bm {\tilde{H}}_0(\bm z \otimes\bm z)
    \end{align*}
\end{subequations}
Here $\bm{\tilde{d}} = \bm P^{\frac{1}{2}}\bm d$, $\bm{\tilde{A}}^S = \frac{1}{2}\left(\bm P^{\frac{1}{2}}\bm A \bm P^{-\frac{1}{2}}+\bm P^{-\frac{1}{2}} \bm A^T \bm P^{\frac{1}{2}}\right)$ and $\bm {\tilde H}_0 = \bm P^{-\frac{1}{2}}\bm H_0 (\bm P^{-\frac{1}{2}}\otimes \bm P^{-\frac{1}{2}})$. We next consider the region where its derivative is negative. It follows that
\begin{subequations}
    \begin{align}
    \label{eq:shifted_energy}
    \dot v(\bm z) &= \bm{\tilde{d}}^T\bm z + \bm z\bm{\tilde{A}}^S \bm z + \frac{1}{3}\bm z^T\bm {\tilde{H}}_0(\bm z \otimes\bm z) \\ \label{eq:second_Step}
    &\leq \|\bm{\tilde{d}}^T\bm z\|_{2} + \lambda_1 \|\bm z\|_{2}^2 + \frac{1}{3}\|\bm {\tilde{H}}_0\|_2\|\bm z\|_{2}^3\\
    \label{eq:final_quadratic_equation}
    &\leq \|\bm{\tilde{d}}\|_2\|\bm z\|_{2} + \lambda_1 \|\bm z\|_{2}^2 + \frac{\epsilon_Q}{3}\|\bm z\|_{2}^3.
\end{align}
\end{subequations}
In the second step, Eq.~\eqref{eq:second_Step}, we replaced each term on the right-hand side with an upper bound on that term, and used the definition of a matrix 2-norm. In the last step, Eq.~\eqref{eq:final_quadratic_equation}, we used the Cauchy-Schwartz inequality and the matrix norm inequality $\|\bm {\tilde{H}}_0\|_2 \leq \|\bm {\tilde{H}}_0\|_F$, where $\|\cdot \|_F$ denotes the Frobenius norm of the matrix.
For our upper bound estimate on $\dot{v}(\bm z)$ to be negative, we consider solving $\|\bm{\tilde{d}}\|_2\|\bm z\|_{2} + \lambda_1 \|\bm z\|_{2}^2 + \frac{\epsilon_Q}{3}\|\bm z\|_{2}^3 = 0$, from which it follows that
\[
    \rho_{-} < \|\bm z\|_2 < \rho_{+},
\]
\begin{align}
\notag
    \rho_{+} &= \frac{3}{2\epsilon_Q} \left( \sqrt{\lambda^2_1 - \frac{4\epsilon_Q}{3}\|\bm{\tilde{d}}\|_2} - \lambda_1 \right),
    \\ \notag
    \rho_{-} &= -\frac{3}{2\epsilon_Q} \left( \sqrt{\lambda^2_1 - \frac{4\epsilon_Q}{3}\|\bm{\tilde{d}}\|_2} + \lambda_1 \right).
\end{align}
These equations can be manipulated into Equations~\eqref{eq:rho_minus} and~\eqref{eq:rho_plus}.
To guarantee $\rho_+$ and $\rho_-$ both being real, $\Delta = \lambda^2_1 - \frac{4\epsilon_Q}{3}\|\bm{\tilde{d}}\|_2$ must be non-negative. For all $\|\bm y\|_{\bm P} = \|\bm z\|_2 \in (\rho_-, \rho_+)$, we now have $v(\bm y) > 0$ and $\dot v(\bm y) < 0$, which yields the claimed result.}
\end{proof}

\subsection{\label{subsec:interp}Interpretation of the local stability radius}
The $\rho_+$ and $\rho_-$ are two nontrivial roots of Eq.~\eqref{eq:final_quadratic_equation}. As shown in the Fig.~\ref{fig:stability_sketch}, $\rho_+$ is a conservative estimate of the diameter of the inscribed ball $\textbf B_{\rho_+}$ in the locally stable region $\bm{\Omega}_{\rho_+}$. $\partial \bm \Omega_{\rho_+}$ has a complex geometry but can be bounded below by $\partial \bm B_{\rho_+}$. Meanwhile $\rho_-$ is an estimate of the diameter of the set $\Omega_{\rho_-}$ and $\dot v = 0$ at the boundary $\partial \Omega_{\rho_-}$. $\bm \Omega_{\rho_-}$ has $\dot{v} > 0$, a shape close to ellipsoid, and can be bounded above by $\partial \bm B_{\rho_-}$. Negative energy growth in $\textbf B_{\rho_+}\setminus \textbf B_{\rho_-}$ {which is an elliptical annulus in the state space of $\bm y$}, is guaranteed according to Thm.~\ref{thm:trapping_radius} and inside $\textbf B_{\rho_-}$ there exists a region $\bm \Omega_{\rho_-}$, inside of which positive energy growth can be observed. Then the long-time behavior of solutions to the system~\eqref{eq:shifted_dynamical^System} with initial conditions $\bm P^\frac{1}{2}\bm y_0 = \bm z_0 \in \textbf B_{\rho_+}$ can alternate between $\textbf B_{\rho_-}\setminus \bm \Omega_{\rho_-}$ and $\bm \Omega_{\rho_-}$, as shown in the Fig.~\ref{fig:trajectory_sketch}. In a special case where $\bm z_0$ is a real solution to Eq.~\eqref{eq:shifted_energy}, the long-time behavior of the system stays at the boundary of $\bm \Omega_{\rho_-}$.
It is not difficult to conclude that
\begin{align}
\label{eq:limit}
    \lim_{\epsilon_Q \to 0} \rho_{+} &= \infty, \quad
    \lim_{\epsilon_Q \to 0} \rho_{-} = \frac{\|\tilde{\bm{d}}\|_2}{|\lambda_1(\tilde{\bm{A}}^S)|},
\end{align}
which correctly reproduces the original trapping theorem guaranteeing global boundedness (in fact, a monotonically trapping region). We can also validate the limit by setting $\epsilon_Q$ to zero so that $\bm  \tilde {\bm H} = 0$, then the time derivative of the Lyapunov function \eqref{eq:shifted_energy} becomes
\begin{align*}
    \label{eq:noepsilonvdot}
    \dot v(\bm z) = \bm{\tilde{d}}^T\bm z + \bm z^T\bm{\tilde{A}}^S \bm z
    &\le \|\bm{\tilde{d}}\|_2\|\bm z\|_{2} + \lambda_1 \|\bm z\|_{2}^2\\
    &= \|\bm z\|_2\left (\|\bm{\tilde{d}}\|_2 + \lambda_1\|\bm z\|_2\right),
\end{align*}
from which we can see that for all $\|\bm z\|_2 > \|\tilde{\bm d}\|_2 /|\lambda_1|$, $\dot v < 0$. And an estimate of the radius of the ball-shape trapping region is given by $R_{\bm m} = \|\tilde{\bm d}\|_2 /|\lambda_1|$.
\begin{figure}
    \centering
    \includegraphics[width=1\linewidth]{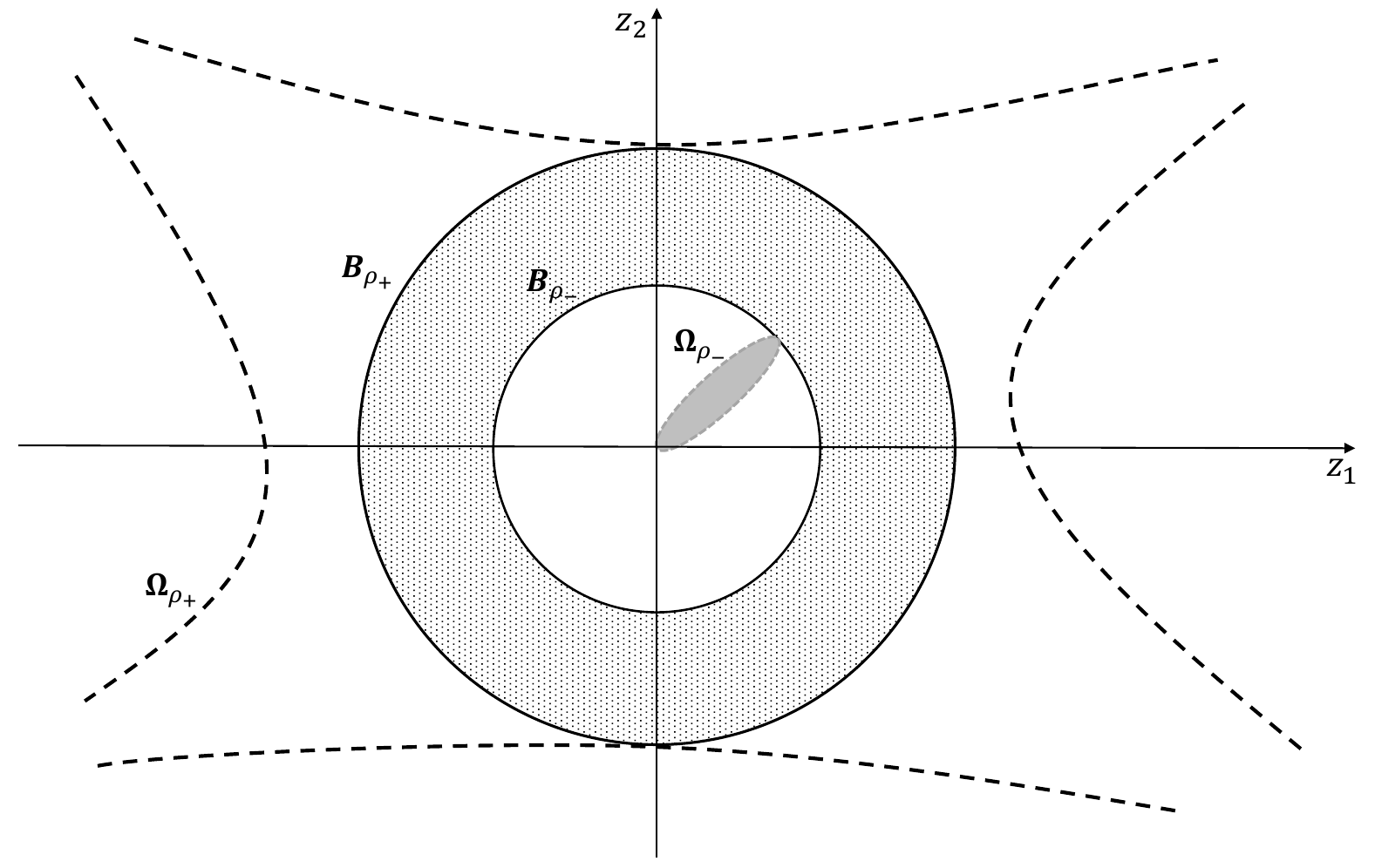}
    \caption{Sketch of the trapping region. $\dot v < 0$ is observed in $\bm \Omega_{\rho_+}\setminus\bm \Omega_{\rho_-}$ and long-term stability is guaranteed for any initial conditions $ \|\bm y(t_0)\|_{\bm P} =\|\bm z(t_0)\|_2 \leq \rho_+$.}
    \label{fig:stability_sketch}
\end{figure}
 \begin{figure}
    \centering
    \includegraphics[width=\linewidth]{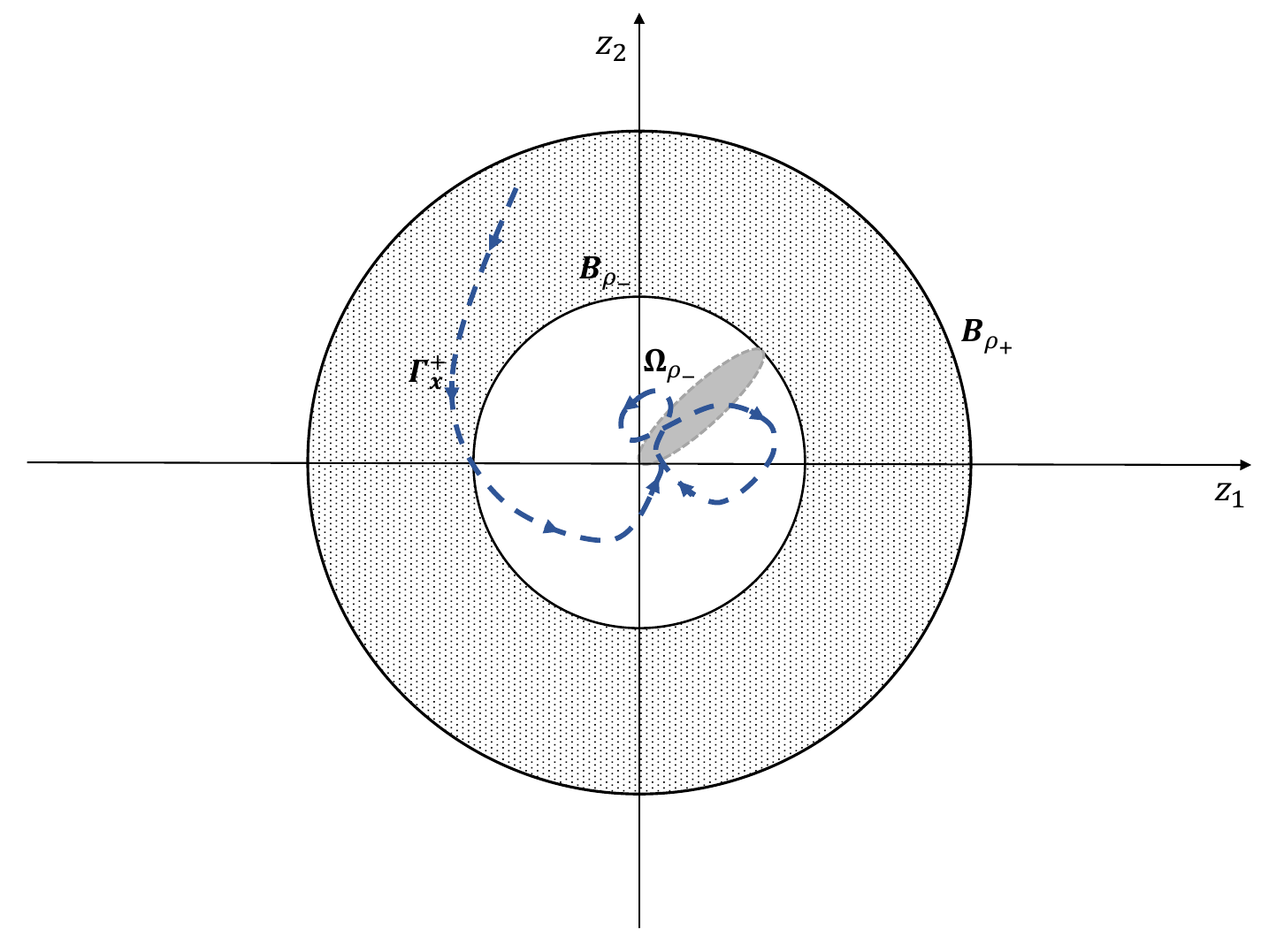}
    \caption{An example trajectory (dashed blue) beginning outside of the locally attracting trapping region $\bm B_{\rho_-}$ and remaining in once it enters in it.}
    \label{fig:trajectory_sketch}
\end{figure} 
Note that the ``shell'' of stability $\textbf B_{\rho_+}\setminus\textbf B_{\rho_-}$ vanishes when $\rho_+ = \rho_-$ or equivalently when the square root vanishes:
\begin{align}
\epsilon_Q = 
    \frac{3\lambda^2_1(\tilde{\bm{A}}^S)}{4\|\tilde{\bm{d}}\|_2},
\end{align}
which provides a condition on the smallness of the totally symmetric part of the quadratic nonlinearity in order for this shell of stability to exist. We use this condition in Appendix~\ref{sec:sensitivity_analysis} to evaluate the sensitivity of this stability estimate to small perturbations in the model coefficients and illustrate evidence that the problem of retaining stability in machine-learned quadratically nonlinear models is generally ill-conditioned. 
%
Note when $\|\bm d\|_2 = 0$, $\bm \Omega_{\rho_-}$ collapses to an asymptotically stable fixed point $\bm m$. 

\section{\label{sec:algorithm}Extended trapping SIND\MakeLowercase y algorithm}
In this section, we develop the extended trapping SINDy algorithm building on Thm.~\ref{thm:trapping_radius}, which enforces local stability guarantees during data-driven model identification. Unlike intrusive methods like POD-Galerkin, which require costly field-derivative evaluations, our approach directly constructs stable models from data via sparse optimization~\cite{brunton2019data, holmes2012turbulence}. 

In particular, we use the trapping SINDy algorithm~\cite{kaptanoglu2021promoting} and formulate a modified ``extended'' trapping SINDy algorithm. We first review the standard SINDy framework~\cite{brunton2016discover} and the trapping SINDy algorithm, which successfully incorporates the trapping theorem (Thm.~\ref{th:trapping_theorem}) into model identification in Sec.~\ref{subsec:std_algrthm}. We then introduce our proposed extended trapping SINDy algorithm in Sec~\ref{subsec:extended_trapping}.

\subsection{\label{subsec:std_algrthm}Standard and trapping SINDy algorithm}

Identifying dynamical systems from data has been an ongoing challenge in mathematical physics, especially in the study of fluid dynamics. The SINDy algorithm leverages the fact that for most physical systems, only a few relevant terms determine the dynamics, to identify parsimonious models from data while avoiding over-fitting. To determine the system of Eq.~\eqref{eq:Galerkin_model} from data, we build SINDy models for the dynamics of $\bm a$. And like most physical systems, we assume the governing equations of dynamics of $\bm a$ consist of only a few terms from a library $\bm \Theta$ consisting of candidate nonlinear functions of the state variable, which makes it sparse in the nonlinear function space. For the dynamics considered in the present work, the system of ODEs for $\bm a$ is at most quadratic in nonlinearity, so that the dynamics will be described as,
\begin{equation}
\label{eq:num_dynamics}
    \frac{d}{dt}\bm a = \bm \Theta(\bm a)\bm \Xi,\ \ \ \ \bm \Theta(\bm a) = \begin{bmatrix}
 | & | & |\\
 1 & \bm a & \bm{a}^{P_2}\\
 | & | & |
\end{bmatrix}.  
\end{equation}
Here, $\bm{a}^{P_2}$ denotes the quadratic polynomial in the state $\bm a$. Consider a collection of snapshots of state $\bm a$ and $\dot{\bm a}$ which come from either measurement or numerical approximation from $\bm a$,
\begin{align*}
    \bm X &= \begin{bmatrix}
        \bm a(t_1) & \bm a(t_2) & \cdots & \bm a(t_M)
    \end{bmatrix}^T, \\
    \dot{\bm X} &= \begin{bmatrix}
        \dot{\bm a}(t_1) & \dot{\bm a}(t_2) & \cdots & \dot{\bm a}(t_M)
    \end{bmatrix}^T.
\end{align*}
The dynamics of Eq.~\eqref{eq:num_dynamics} become $\dot{\bm X} =\bm \Theta (\bm X)\bm \Xi$, where $\bm \Xi = [\bm \xi_1\ \bm \xi_2\ \cdots\ \bm \xi_r]$ is a sparse matrix with each $\bm \xi_j \in \mathbb R^N$ indicating which of the corresponding candidate functions in the library $\bm \Theta \in \mathbb R^{M\times N}$ are active in the $\dot{a}_j$ equation. One can find that the quadratic library has $N = \frac{1}{2}(r^2 + 3r) + 1$ terms. Now we have the sparse optimization problem introduced in the standard SINDy algorithm:
\begin{equation}
    \bm \xi_j = \operatorname*{argmin}_{\bm \xi_j'}\left [\|\dot{\bm X}_{:,j}-\bm \Theta(\bm X)\bm \xi_j'\|_2 + \gamma\|\bm \xi_j'\|_0\right ],
\end{equation}
where $\dot{\bm X}_{:,j}$ denotes the $j^{th}$ column of $\dot{\bm X}$, $\|\cdot\|_0$ counts the number of nonzero elements and $\gamma$ determines the strength of sparsity-promotion during identification of the $\bm \Xi$. One may also notice that $\|\cdot\|_0$ is not convex, which leads to a nonconvex optimization. One way to ameliorate this is to instead use the $\ell_1$ norm,resulting in the LASSO~\cite{Tibshirani1996}. The $\ell_1$ is convex but promotes a weaker form of sparsity. A computationally efficient alternative to directly solving the $\ell_0$-regularized problem was demonstrated in Brunton \textit{et al}.~\cite{brunton2016discover}, where a sequential threshold least-squares algorithm was used to greedily solve the problem; its convergence properties were proven in~\cite{zhang2019convergence}.

To encode the enforced energy-preserving constraint on the quadratics terms, Loiseau \textit{et al}.~\cite{loiseau_constrained_2018} vectorizes $\bm \Xi[:] = \bm \xi \in \mathbb R^{rN}$ to recast the constrained minimization problem as an unconstrained convex problem using an augmented functional formulation. If the constraint is encoded as $\bm C \bm \xi = \bm q$, together with the improvements above, the optimization problem reads,
\begin{equation}
    \begin{aligned}
    \bm \xi = \operatorname*{argmin}_{\bm \xi'}\max_{\bm \zeta}\biggr [& \frac{1}{2}\|\dot{\bm X}[:]-\bm{\hat \Theta}(\bm X)\bm \xi'\|^2_2 + \gamma\|\bm \xi'\|_1 \\
    &+ \bm \zeta^T(\bm C\bm \xi' - \bm q) \biggr ]
\end{aligned}
\end{equation}
where $\bm{\hat \Theta}(\bm X) = \bm I \otimes \bm \Theta(\bm X)$, $\dot{\bm X}[:]$ is the vectorized $\dot{\bm X}$ so that we have $\bm{\hat \Theta}\in \mathbb R^{rM\times rN}$, $\dot{\bm X}[:] \in \mathbb R^{rM}$. The number of constraints of the energy-preserving structure is given by $p = r(r+1)(r+2)/6$ and therefore we have $\bm C \in \mathbb R^{p\times rN}$, $\bm q\in \mathbb R^{p}$ and the Lagrange multiplier $\bm \zeta \in \mathbb R^p$. Additionally, for the linear equality constraints of the energy-preserving structure, $\bm q = 0$ and the number of free parameters is given by $rN - p = 2p$. If needed, these constraints can be used to impose any additional linear relationships from physical knowledge, such as enforcing global invariant preservation or enforcing \textit{a priori} known values of any entries $\xi_i$.

It is still not enough to guarantee the global stability of the linear-quadratic systems with the energy-preserving constraint. Based on the Schlegel and Noack trapping theorem, enforcing global stability for quadratically nonlinear dynamics requires searching for an $\bm m$ that makes $\tilde{\bm A}^S$ \textit{Hurwitz} without breaking the energy-preserving structure of the systems. In other words, the largest real part of the eigenvalue of $\tilde{\bm A}^S$ should be negative, so that the optimization problem becomes,
\begin{equation}
\begin{aligned}
    \label{opt:trappingSINDy}
    \bm \xi, \bm \zeta, \bm m = \operatorname*{argmin}_{\bm \xi', \bm m'}\ &\max_{\bm \zeta} \Biggr[ \frac{1}{2}\|\dot{\bm X}[:]-\bm{\hat \Theta}(\bm X)\bm \xi'\|^2_2 + \gamma\|\bm \xi'\|_1 \\
    & \left. + \bm \zeta^T(\bm C\bm \xi' - \bm q) + \frac{\bm \lambda_1(\tilde{\bm A}^S)}{\eta} \right],
\end{aligned}
\end{equation}
where the new term denotes that the largest eigenvalue of $\tilde{\bm A}^S$ will be enforced to be negative so that $\tilde{\bm A}^S$ is \textit{Hurwitz}, and the magnitude of whose loss is modulated by the hyperparameter $\eta$. One should notice that the last term of this problem is not convex but convex composite, making the problem non-convex, however the details of the algorithm used for solving the optimization problem in~\eqref{opt:trappingSINDy} will not be introduced in this work; see Kaptanoglu \textit{et al}.~\cite{kaptanoglu2021promoting} for more details.

\subsection{\label{subsec:extended_trapping} Proposed extended trapping SINDy algorithm}

Ideally, we would expect linear-quadratic systems with the quadratic energy-preserving structure to be globally stable and exhibit a monotonically trapping region according to Schlegel and Noack theorem. However, for quadratically nonlinear dynamics that do not have this energy-preserving structure, the stability is not guaranteed and generic quadratically nonlinear systems may be globally unbounded. Hence it is desirable to promote the local Lyapunov stability of the origin with some variations of the original trapping algorithm. For instance, problems such as open-channel flow and the Von Karman vortex street can weakly break the energy-preserving nonlinearities.

Similarly to Sawant \textit{et al}.~\cite{SAWANT2023115836} and Erichson \textit{et al}.~\cite{Erichson2019PhysicsinformedAF}, we can guarantee local stability and promote large stable radii for data-driven models with nonlinearities that do not preserve the energy by minimizing the destabilizing effects of the cubic terms in $\dot v(\bm y)$. The Sawant \textit{et al}.~\cite{SAWANT2023115836} optimization is approximately,
\begin{equation}
\begin{aligned}
    \label{opt:minimize Qijk}
    \bm \xi, \bm m = \operatorname*{argmin}_{\bm \xi', \bm m'}\biggr [& \frac{1}{2}\|\dot{\bm X}[:]-\bm{\hat \Theta}(\bm X)\bm \xi'\|^2_2 + \gamma\|\bm \xi'\|_1 \\
    &+ \alpha ^{-1}\|\bm H\|^2_F \biggr ],
\end{aligned}
\end{equation}
where the matrix $\bm H$ defined by~\eqref{eq:transform_T} can be extracted from the model coefficients. If the regularizer, here the $l_1$-norm, is convex then the optimization problem is convex. Sawant \textit{et al}. leverage this advantage and make the matrix $\tilde{\bm A}^S$ \textit{Hurwitz} via a post-processing step, making the optimization problem much easier to solve. However, this a-posteriori fixing of the \textit{Hurwitz} condition for $\tilde{\bm A}^S$ has the downside that it may take the solution significantly out of the local minimum found during optimization. For this reason, we combine the two loss functions in Eq.~\eqref{opt:trappingSINDy} and \eqref{opt:minimize Qijk} without the energy-preserving constraint to get,
\begin{equation}
\begin{aligned}
    \label{opt:minimize Qijk with trappingSINDy}
    \bm \xi, \bm m = \operatorname*{argmin}_{\bm \xi', \bm m'} \biggr [&\frac{1}{2}\|\dot{\bm X}[:]-\bm{\hat \Theta}(\bm X)\bm \xi'\|^2_2 + \gamma\|\bm \xi'\|_1 \\
    & + \eta^{-1}\bm \lambda_1(\tilde{\bm A}^S) + \alpha ^{-1}\|\bm H\|^2_F \biggr ].
\end{aligned}
\end{equation}
This loss function promotes a \textit{Hurwitz} $\tilde{\bm A}^S$ matrix while minimizing the destabilizing effects of the cubic contributions in $\dot v(\bm y)$. 
Moreover, $\tilde{\bm A}^S$ is defined by Eq.~\eqref{eq:def_AS_LS_dm} instead of Eq.~\eqref{eq:preservingAs} in this setting since the energy-preserving constraints are not prescribed. In principle, one could instead reformulate the last loss term as an inequality constraint, but we refrain from this strategy because it would involve inequality constraints that are not affine in the model coefficients. 

We now additionally promote locally stable models without minimizing the nonlinear effect of the systems. Thm.~\ref{thm:trapping_radius} enables local stability guarantees for quadratic systems with weakly energy-preserving nonlinearities. Recall that under certain conditions, linear-quadratic models can be locally bounded and exhibit a trapping region even if the hard constraint of the energy-preserving structure is relaxed. For ease of notation, we introduce $ Q'_{ijk} = Q_{ijk} + Q_{jki} + Q_{kij}$, $\bm H_0 = \mathcal{T}(\bm Q')$ and $\tilde{\bm H}_0 = \bm P^{-\frac{1}{2}}\mathcal{T}\bm H_0(\bm P^{-\frac{1}{2}}\otimes \bm P^{-\frac{1}{2}})$. The objective function now reads,
\begin{equation}
\begin{gathered}
\label{opt:trapping_inequality_constraint}
\bm \xi, \bm m = \operatorname*{argmin}_{\bm \xi', \bm m'}\left [ \frac{1}{2}\|\dot{\bm X}[:]-\bm{\hat \Theta}(\bm X)\bm \xi'\|^2_2 + \gamma\|\bm \xi'\|_1 + \frac{\bm \lambda_1(\tilde{\bm A}^S)}{\eta}\right ] \\
\text{s.t.} \quad \|\tilde{\bm H}_0\|_F = \epsilon_Q,
\end{gathered}
\end{equation}
or the form of unconstrained optimization,
\begin{equation}
\begin{aligned}
\label{opt:trapping_EP_loss_term}
\bm \xi, \bm m = \operatorname*{argmin}_{\bm \xi', \bm m'}\biggr[ &\frac{1}{2}\|\dot{\bm X}[:]-\bm{\hat \Theta}(\bm X)\bm \xi'\|^2_2 + \gamma\|\bm \xi'\|_1\\
&+ \eta^{-1}\bm \lambda_1(\tilde{\bm A}^S) + \beta^{-1}\|\tilde{\bm H}_0\|_F^2\biggr ].
\end{aligned}
\end{equation}
%

Eq.~\eqref{opt:trapping_EP_loss_term} is quite similar to the objective function used in Ouala \textit{et al}.~\cite{Ouala2023}. This optimization can be understood as maximizing the stability radius about the origin at $\bm m$ by promoting energy-preserving structure in $\bm Q$, instead of maximizing the stability radius about the radius by minimizing the norm of $\bm Q$ as in Sawant \textit{et al}.~\cite{SAWANT2023115836}, which makes the nonlinearities weak. When $\|\tilde{\bm H}_0\|_F^2$ is reduced to $0$, global stability is obtained and strict energy-preserving constraints can be found in $\bm Q$. In other words, the constrained optimization in~\cite{kaptanoglu2021promoting} is achieved using this unconstrained optimization. Moreover, by introducing the Lyapunov matrix $\bm P$, we generalize the energy-preserving structure which is restricted to the kinetic energy of the flow in the very original trapping SINDy algorithm. Considering that there is no ``true zero" due to the existence of machine precision in computer arithmetic; even the resulting models from Eq.~\eqref{opt:trappingSINDy} will not be truly globally bounded numerically. A description of each of the hyperparameters $\gamma, \eta$ and $\beta$ is provided in Table~\ref{tab:hyperparameters}.

Notice that we still need $\tilde{\bm A}^S$ to be \textit{Hurwitz} to get the estimated stability radius, and the largest eigenvalue $\lambda_1$ of $\tilde{\bm A}^S$ determines the stability radius directly. By additionally optimizing for the Lyapunov matrix $\bm P$, as in Goyal \textit{et al}.~\cite{goyal2023guaranteed}, we may be able to better control the magnitude of $\lambda_1$.

\begin{table*}[!t]
\caption{Description of extended trapping SINDy hyperparameters.}

\label{tab:hyperparameters}
\begin{tabular}{cp{0.97\textwidth}}
\hline\hline
$\gamma$ & Specifies the strength of sparsity-promotion through the $l_1$ regularizer. For simplicity, all results in this paper use $\gamma = 0$. \\ 
$\eta$ & Specifies how strongly to push the algorithm towards models with \textit{Hurwitz} $\tilde{\bm A}^S$.  \\
$\beta$ & Determines how strongly to penalize nonzero $\|\tilde{\bm H}_0\|_F$.\\
\hline \hline
\end{tabular}

\end{table*}

\subsection{\label{subsubsec:lsfunccomparison} Interpretation of the loss term}
Although the optimization settings of Eq.~\eqref{opt:minimize Qijk with trappingSINDy} and Eq.~\eqref{opt:trapping_EP_loss_term} seem to be similar and can both promote models that are locally stable by construction, the results are noticeably different. For Eq.~\eqref{opt:minimize Qijk with trappingSINDy}, this approach produces stable data-driven models with large stable radii by weakening the nonlinearities to get linearized models. However, it necessarily throws out  information regarding the nonlinear behavior of the physical systems, especially at large distances from the origin. In other words, this method of directly punishing the strength of the nonlinearities is primarily suitable for systems with weak nonlinearities.

Instead of minimizing the $\|\bm H\|_F $, Eq.~\eqref{opt:trapping_EP_loss_term} minimizes $\|\tilde{\bm H}_0\|_F$. To explain further, we do not promote the local stability of the models by weakening the nonlinearities of the system, but by enforcing a weakly quadratic energy-preserving structure. Moreover, we here use this optimization to find the $\rho_+$ and $\rho_-$, which guarantee the stability of the model based on Thm.~\ref{thm:trapping_radius}. Specifically, the size of the domain of boundedness $\rho_+$ and the size of the trapping region $\rho_-$ is derived from the optimized $\|\tilde{\bm H}_0\|_{max}$. While Ouala \textit{et al}.~\cite{Ouala2023} make a number of important contributions and extensions to neural-network-based system identification with stability, they are missing the $\rho_+$ and $\rho_-$ estimates derived in the present work that quantify the stability regions.


We can also reproduce the results from the work by Kaptanoglu \textit{et al}.~\cite{kaptanoglu2021promoting} through Eq.~\ref{opt:trapping_EP_loss_term}, which means our method could also promote global stability in quadratically nonlinear models. As what we show in Sec.~\ref{subsec:interp}, global stability of the models can be achieved, theoretically, by reducing the loss of $\|\tilde{\bm H}_0\|_F$ to machine precision.

\section{\label{sec:results}Results}
We now evaluate the performance and noise robustness of our extended trapping SINDy algorithm on tasks including model identification and simulation, for a few canonical systems which are commonly seen in fluid dynamics. Two widely used and appropriate definitions to assess the model quality are the time-average error $E_f$ in $\dot{\bm X}$, and for models with known closed forms, the relative Frobenius error $E_\text{coef}$ in $\bm \Xi$,
\begin{subequations}
\label{eq:error}
    \begin{align}
        E_f &= \frac{\overline{\|\dot{\bm X}_{\text{True}} - \dot{\bm X}_{\text{SINDy}}\|^2}}{\|\dot{\bm X}_{\text{True}}\|^2},\\
        E_\text{coef} &= \frac{\|\bm \Xi_{\text{True}} - \bm \Xi_{\text{SINDy}}\|_F}{\| \bm \Xi_{\text{True}} \|_F}.
    \end{align}
\end{subequations}
Another common quantity to evaluate the performance of data-driven models is the relative prediction error,
\begin{align}
    E_\text{pred} = \frac{\| \bm X_{\text{True}} - \bm X_{\text{SINDy}} \|_F}{\| \bm X_{\text{True}} \|_F}.
\end{align}
We also demonstrate the results qualitatively with dynamical trajectories of the model in order to examine, for instance, the noise robustness of the computational model. Another reason to exhibit results qualitatively rather than quantitatively is that quantifying the model performance for chaotic systems can be challenging because of sensitivity to initial conditions and aperiodic long-term behavior. However, despite the different ways we present for each example, we conduct each experiment in a similar approach. For each system, a single trajectory is used to train a SINDy model with the extended trapping optimizer. Keeping the same temporal duration, we then evaluate the model on different trajectories with new random initial conditions. Table~\ref{tab:results} summarizes the sampling, hyperparameters, and identified local stability estimates for a number of examples discussed in Sec.~\ref{sec:noisy_Lorenz}-\ref{sec:dysts_examples}. Several chaotic systems with different strange attractors will be presented in this section. Since it is usually difficult to quantify model performance for a chaotic system, we utilize Eq.~\eqref{eq:error} to investigate the utility of our extended trapping algorithm on these systems and compute the time-average error $E_f$ in $\dot{\bm X}$, rather than calculating the relative prediction error $E_\text{pred}$.

\begin{table*}[!t]
\caption{Description of the sampling, extended trapping SINDy hyperparameters, and identified local stability estimates for the dynamic systems examined in Sec.~\ref{sec:results}. $r$ denotes the dimension of the model. $\beta$ and $\eta$ control the minimization of the two terms in Eq.~\eqref{opt:trapping_EP_loss_term}. $\epsilon_Q$ reports the deviation from $\tilde{\bm H}_0$ having zero totally symmetric part. $\lambda_1$ denotes the largest eigenvalue in each $\tilde{\bm A}^S$ of the system.}
{\begin{tabular*}{\linewidth}{@{\extracolsep{\fill}} lccccccccccc }

\hline
\hline

Dynamic system    & $r$ & $\Delta t$ & $\beta$  & $\eta$ & $\epsilon_Q$ & $\rho_-$ & $\rho_+$ & $R_{eff}$ & $\lambda_1$ & $E_\text{coef}$ & $E_f$     \\ \hline
Lorenz Attractor  & 3 & 0.01     & $10^{-10}$ & $10^{5}$ & $8.2\times 10^{-7}$ & $103$    & $1.8\times 10^6$    & 4.2     & -0.98     & 0.013 & $10^{-5}$ \\
10\% noisy Lorenz & 3 & 0.01     & $10^{-14}$ & $10^{2}$ & $6.6\times 10^{-9}$ & $132$    & $2.2\times 10^8$    & 5.27     & -0.94     & 0.89 & 0.18    \\
50\% noisy Lorenz & 3 & 0.01     & $10^{-10}$ & $10^{4}$ & $0.01$              & $0$    & $0$    & $0$       & 0.05      & 1.3 & 0.9     \\
Finance       & 3 & 0.1 & $10^{-9}$ & $10^{2}$ & $8.2\times 10^{-10}$ & 32.6 & $1.8\times 10^8$ & 32.6 & -0.1 & 0.18 & $10^{-7}$ \\
Hadley        & 3 & 0.015 & $10^{-9}$ & $10^{2}$ & $2.8\times 10^{-10}$ & 21.3 & $5.3\times 10^8$ & 21.3 & -0.1 & 0.003 & $10^{-9}$  \\
LorenzStenflo & 4 & 0.039 & $10^{-9}$ & $10^{2}$ & $5.2\times 10^{-7}$ & 181 & $2.9\times 10^5$ & 181 & -0.1 & 0.03 & $10^{-7}$  \\
VallisElNino  & 4 & 0.022 & $10^{-9}$ & $10^{2}$ & $6.5\times 10^{-9}$ & $2.4\times 10^3$ & $2.3\times 10^7$ & $2.4\times 10^3$ & -0.1 & 0.7 & $10^{-5}$  \\ 
Von Karman (enstrophy) & $5$ & $0.1$ & $10^{-10}$ & $10^{4}$ & $5.9\times 10^{-10}$ & $4\times 10^{3}$ & $2.9\times 10^{8}$ & $2.4\times 10^{3}$ & $-0.12$ &  & \\ \hline \hline
\end{tabular*}}

\label{tab:results}
\end{table*}

The local stability size $\rho_+$ and the smallest trapping region size $\rho_-$ for each model are calculated if applicable for evaluation purpose. It is usually expected that we should observe an increase of $\rho_+$ and a decrease of $\rho_-$ with the number of optimization iterations. This trend is shown in Fig.~\ref{fig:iteration_illustration} and illustrates Thm.~\ref{thm:trapping_radius} and the two limits~\eqref{eq:limit}. To compare the smallest trapping region size across different systems we identify, a normalized quantity defined as $R_\text{eff} = \rho_-/\sqrt{\sum_{i=1}^{r}\bar{y}^2_i}$ is reported as well. 

\subsection{\label{sec:noisy_Lorenz} Noisy Lorenz attractor}

The well-known chaotic system developed by Lorenz in 1963~\cite{lorenz1963deterministic} has interested researchers for decades. General findings about the existence of a monotonically attracting trapping region are shown for quadratically nonlinear systems in Lorenz's work. Swinnerton-Dyer showed an attracting trapping region for the Lorenz system via the existence of a Lyapunov function outside the trapping region~\cite{SWINNERTONDYER2001}. The globally stable chaotic Lorenz system reads,
\begin{figure}
    \centering
    \includegraphics[width=\linewidth]{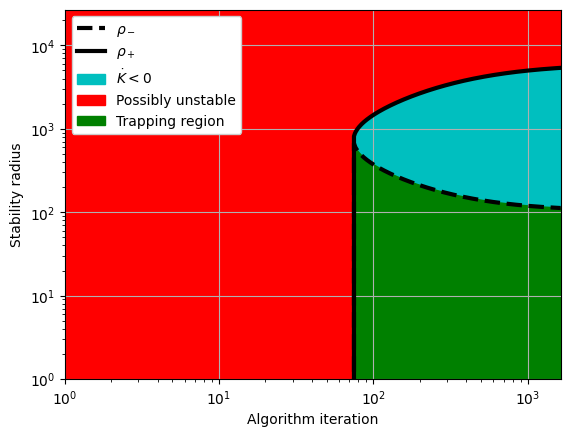}
    \caption{The progress of stability radius growth during optimization of identifying Lorenz attractor from data. The green area below the dashed line is the smallest ball-shape trapping region found by the model and the blue area between the solid line and the dashed line corresponds to the dotted area in Fig.~\ref{fig:trajectory_sketch}, where $\dot K < 0$. All trajectories in this area will eventually fall into the green one. The red area outside means no guarantees for the stability and trajectories may be long-time unstable.}
    \label{fig:iteration_illustration}
\end{figure}
\begin{subequations}
    \begin{align}
        \dot x &= \sigma(y-x),\\
        \dot y &= x(\rho - z) - y, \\
        \dot z &= xy - \beta z.
    \end{align}
\end{subequations}

\begin{figure}
\begin{center}
    \begin{subfigure}[b]{\linewidth}

    \includegraphics[width=\linewidth]{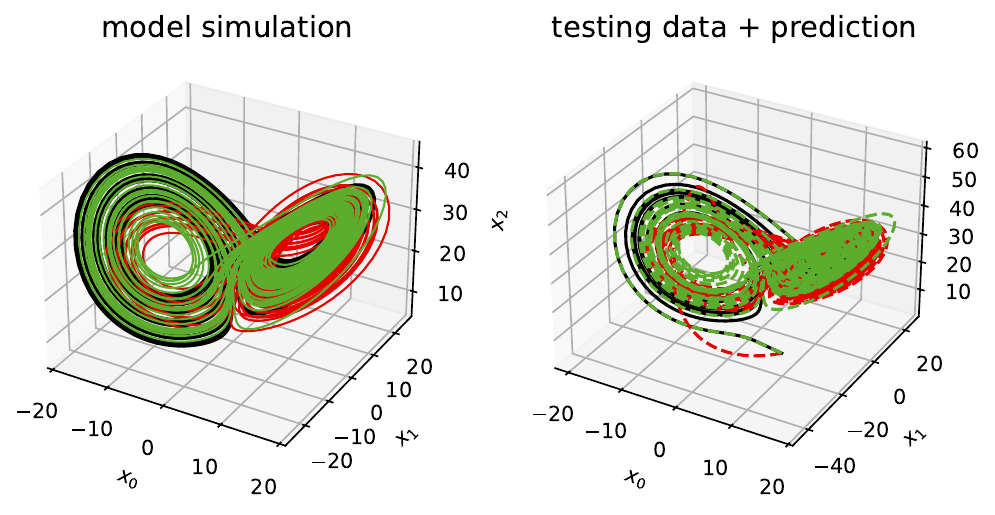}
    \caption{Lorenz system without noise.}
    \label{fig:Lorenz_no_noise}
    \end{subfigure}

    \begin{subfigure}[b]{\linewidth}

    \includegraphics[width=\linewidth]{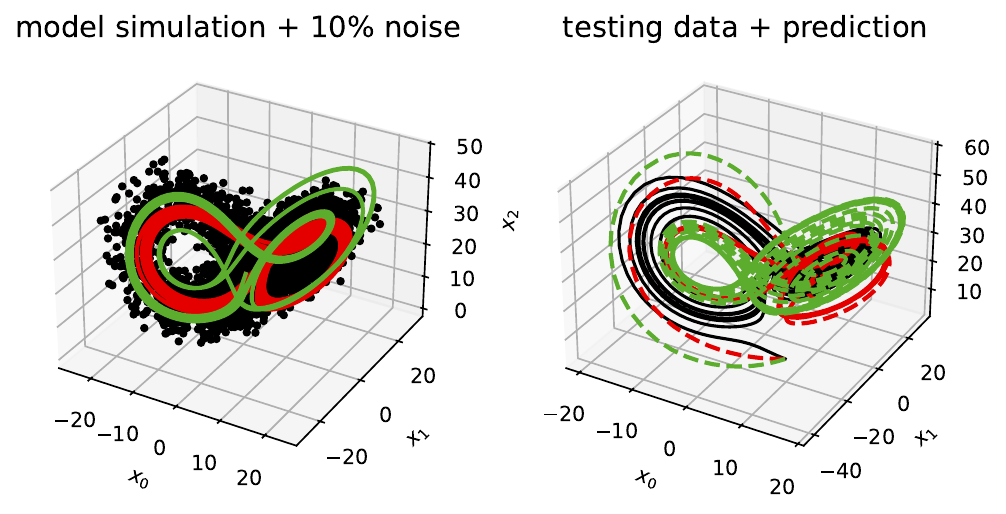}
    \caption{Lorenz system with 10\% noise.}
    \label{fig:comparison_noise_10}
    \end{subfigure}

    \begin{subfigure}[b]{\linewidth}

    \includegraphics[width=\linewidth]{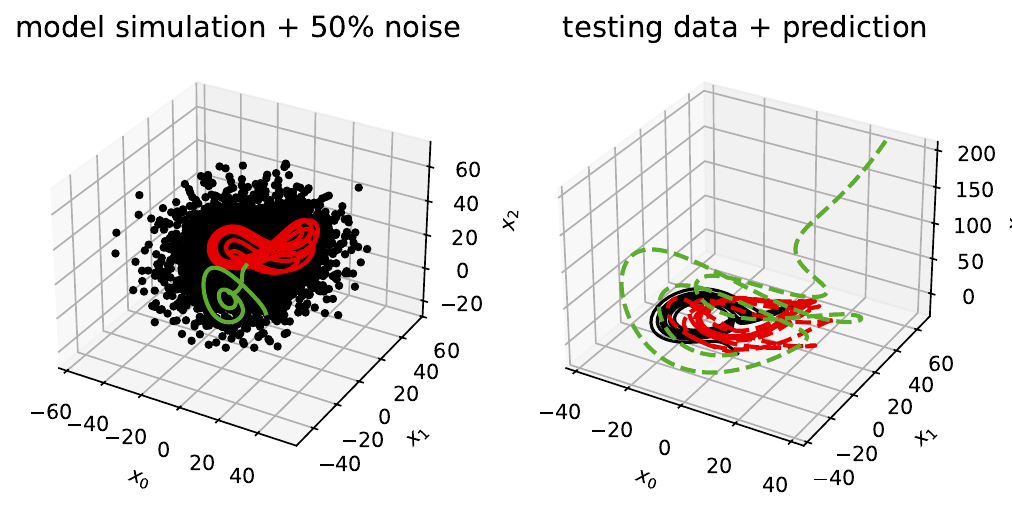}
    \caption{Lorenz system with 50\% noise.}
    \label{fig:comparison_noise_50}
    \end{subfigure}
\end{center}
    \caption[] 
        {A collection of Lorenz attractors with additive Gaussion noise. Left: Lorenz trajectories with noise for training (black), the corresponding extended trapping SINDy models (red) and the corresponding base SINDy models (green). Right: Lorenz trajectories without noise for test (black, their prediction by extended trapping SINDy models (red) and prediction by base SINDy models (green). } 
        \label{fig:noiselorenz}
\end{figure}
We choose the same parameters as in Lorenz 1963~\cite{lorenz1963deterministic}, which are $\sigma = 10$, $\rho = 28$, $\beta = 8/3$. We then add up to $50\%$ noise to the data with the choice of Lyapunov matrix $\bm P = \bm I$ to examine the noise robustness of the extended trapping algorithm. The noise is Gaussian noise with zero mean and the standard deviation given by the root mean square of the training trajectory. 

The results indicate that the algorithm can improve robustness to noise with respect to the identified model stability. To control for differential realizations of noise, we made twenty SINDy models from the same Lorenz training trajectory that receives added noise using realizations drawn from the noise distribution. We found that $17/20$ models were identified successfully  with local stability guarantees by the extended trapping algorithm in this work. At $50\%$ noise, this number drops to $10/20$ models. For comparison, the traditional SINDy algorithm produces $0/20$ provably stable models for both $25\%$ and $50\%$ noise added. Using SINDy with fairly tight energy-preserving constraints, but no enforcement of the Hurwitz property of $\tilde{\bm A}^S$, produces $10/20$ provably stable models at $25\%$ noise and $0/20$ at $50\%$ noise.

Fig.~\ref{fig:noiselorenz} illustrates the results with the extended trapping algorithm. With $10\%$ noise added, the attractor can be well identified. With $50\%$ noise added, in this case the identified model is still stable and represents a different strange attractor, but it fails to report a locally stable region. 
%
Another observation is that the $\tilde{\bm A}^S$ matrix of this $50\%$ noise model is not \textit{Hurwitz}, which has one and only one positive eigenvalue. This implies that even if the $\tilde{\bm A}^S$ is not negative-definite, the quadratically nonlinear system may still be locally stable. The reason could be that local stability holds and it is only our stability \textit{estimate} that has broken down. Or it may not be fully locally stable, but simply have only a very small or degenerate set of initial conditions that can in principle escape to infinity. We offer some potential insight into this scenario in Sec.~\ref{sec:conc}.

\subsection{Assorted locally stable models from the dysts database}\label{sec:dysts_examples}
The dysts dataset~\cite{gilpin2023chaos} is a collection of more than 100 chaotic systems ideal for benchmarking system identification methods~\cite{kaptanoglu2023benchmarking}. Four systems with the energy-preserving structure in the quadratic terms (and effective nonlinearity) are picked for benchmark testing as shown in Fig~\ref{fig:dysts_fig}. While these systems theoretically possess lossless nonlinearities, they cannot be maintained numerically. Nevertheless, the dataset remains valid for testing the extended algorithm and deriving stability estimates for the resulting numerical models. The datasets of all systems are normalized so we can compare the performance of one set of combinations of hyperparameters on different dynamical systems. The algorithm successfully recognizes the attractors and performs well on promoting stability in modeling these systems as well as the base SINDy, which is because we use clean data to train each model and the systems are all well-defined with large stability domain. However, base SINDy usually does not perform well on noisy data as we have shown in the previous example. Table~\ref{tab:results} also gives the normalized trapping region size of each system.

The finance chaotic system shown in Fig.~\ref{fig:finance} is defined in Cai \textit{et al}.~\cite{cai2007new}, and described as
\begin{subequations}
    \begin{align}
        \dot x &= z + \left (y-\alpha + \frac{1}{\beta} \right)x,\\
        \dot y &= 1 - \beta y - x^2, \\
        \dot z &= -x - \sigma z,
    \end{align}
\end{subequations}
where the parameters are $\alpha = 0.001$, $\beta = 0.2$ and $\sigma = 1.1$.
\begin{figure*}
        \begin{subfigure}[b]{0.252\textwidth}
            \includegraphics[width=\linewidth]{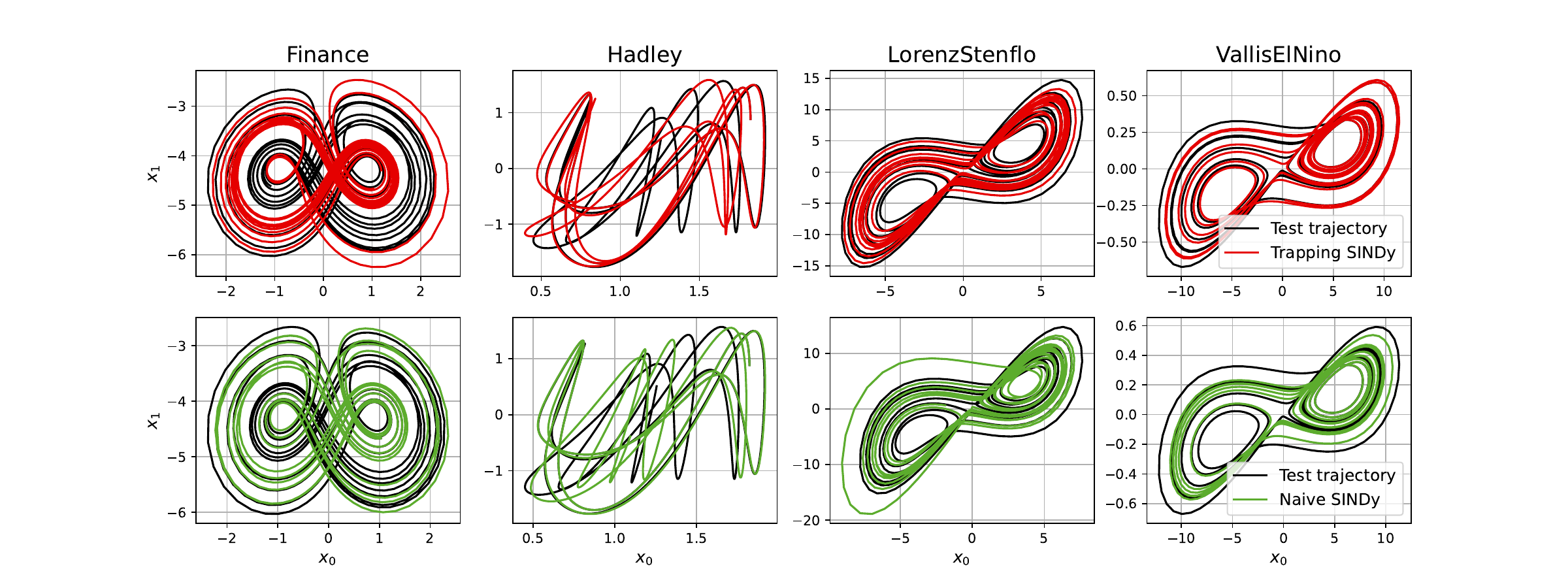}
            \caption[]%
            {{\small}}    
            \label{fig:finance}
        \end{subfigure}
        \begin{subfigure}[b]{0.23\textwidth}  
            \includegraphics[width=\linewidth]{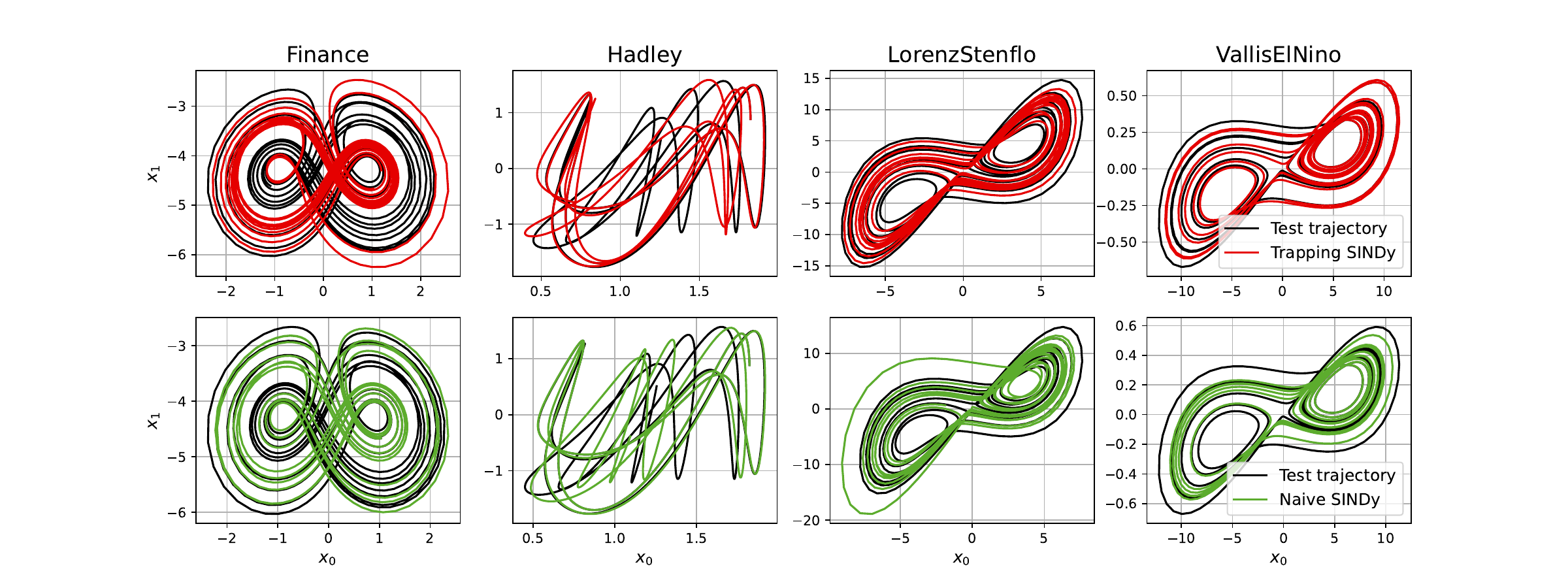}
            \caption[]%
            {{\small}}    
            \label{fig:hadley}
        \end{subfigure}
        \begin{subfigure}[b]{0.23\textwidth}   
            \includegraphics[width=\linewidth]{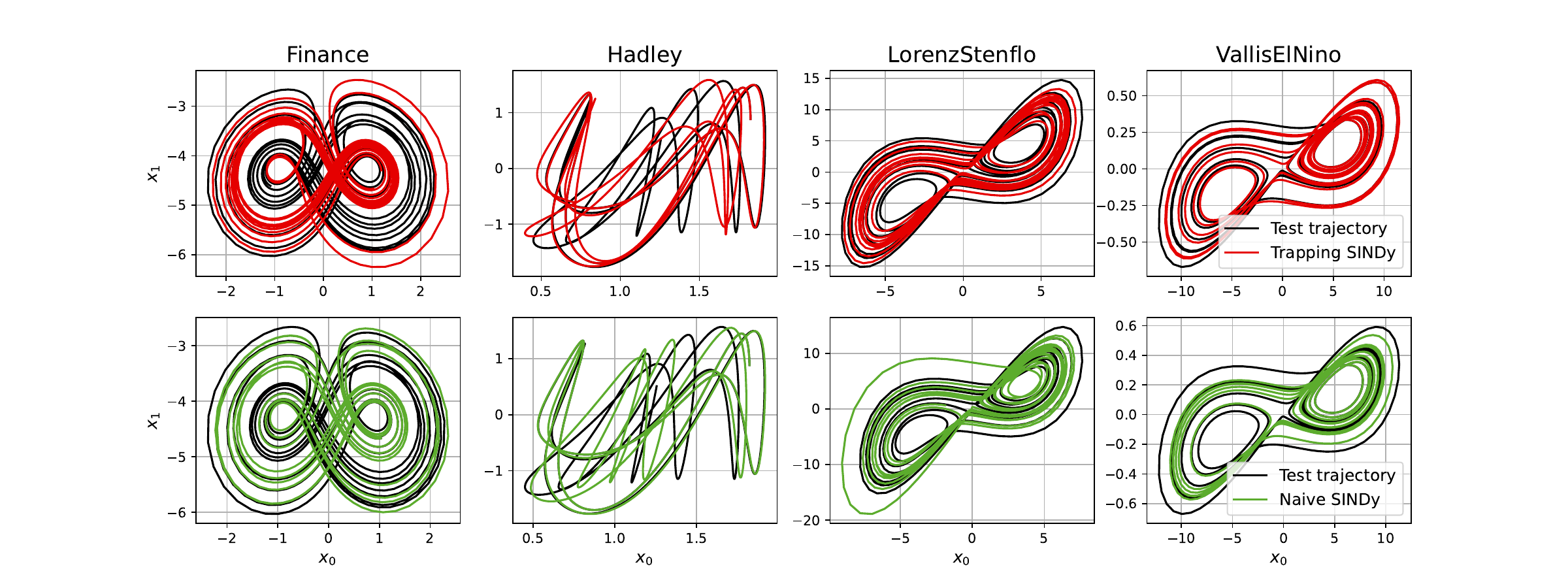}
            \caption[]
            {{\small}}
            \label{fig:lorenzstenflow}
        \end{subfigure}
        \begin{subfigure}[b]{0.245\textwidth}   
            \includegraphics[width=\linewidth]{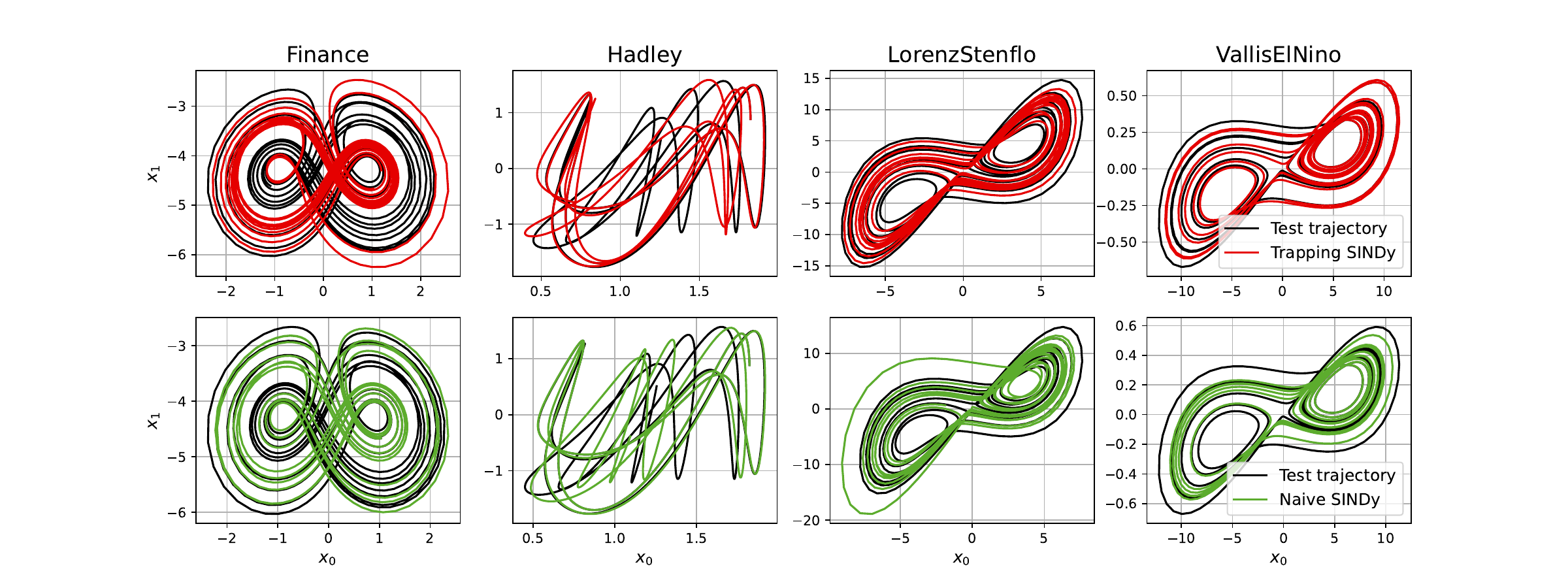}
            \caption[]%
            {{\small }}    
            \label{fig:valliselnino}
        \end{subfigure}
        \caption[]
        {A collection of benchmark test on dysts database. Trajectories of (a) a finance chaotic system, (b) Hadley cell dynamics, (c) Stochastic Lorenz–Stenflo system, (d) El Niño phenomenon for testing (black). Locally stable trapping SINDy models (red) capture and predict the strange attractors on these testing data, producing large stable radii for all four systems that satisfy the totally-symmetric quadratic coefficient constraints, each with a comparing prediction from accordingly base SINDy model (green).} 
        \label{fig:dysts_fig}
\end{figure*}
The Hadley system is an atmospheric convective cell which is also known as Hadley cell or Hadley circulation~\cite{hadley1735vi}. It was first envisioned in 1735 to explain the trade winds by Hadley. One way to describe the physical process of Hadley circulation is through this dynamical system, 
\begin{subequations}
    \begin{align}
        \dot x &= -y^2 - z^2 - ax + af,\\
        \dot y &= xy - bxz - y + g, \\
        \dot z &= bxy + xz - z,
    \end{align}
\end{subequations}
where the parameters are $a = 0.2$, $b = 4$, $f = 9$ and $g = 1$. The result of a 3D model for this test is shown in Fig.~\ref{fig:hadley}.

The Stochastic Lorenz–Stenflo system is a 4D system describing atmospheric acoustic-gravity waves by Stenflo~\cite{stenflo1996generalized}. The system reads,
\begin{subequations}
    \begin{align}
        \dot x &= \sigma(y-x) + sw,\\
        \dot y &= rx - xz - y, \\
        \dot z &= xy - bz,\\
        \dot w &= -x - \sigma w,
    \end{align}
\end{subequations}
where the parameters are $\sigma = 2$, $b = 0.7$, $r = 26$ and $s = 1.5$.

The El Niño phenomenon has a great impact on the global climate and a simple model to describe it is the Vallis continuous-time model~\cite{vallis1988conceptual}. The model is a 3D quadratically nonlinear system of ODEs and has three parameters:
\begin{subequations}
    \begin{align}
        \dot x &= By - C(x+p),\\
        \dot y &= -y + xz, \\
        \dot z &= -z - xy + 1,
    \end{align}
\end{subequations}
where $B = 102$, $C = 3$ and $p = 0$ results in the chaotic dynamics shown in Fig.~\ref{fig:valliselnino}. For the four systems, we straightforwardly identify models with large locally stable regions with the final $\rho_-$ and $\rho_+$ listed in Table~\ref{tab:results}.

\subsection{Von Karman vortex street using the enstrophy}\label{sec:von_karman_enstrophy}
The fluid wake behind a bluff body is characterized by a periodic vortex shedding phenomenon known as a von Kármán street. 
The two-dimensional incompressible flow past a cylinder is a typical example of such behavior, and has been a benchmark problem for Galerkin models for decades.
Early Galerkin models of vortex shedding, based on a POD expansion about the mean flow, captured the oscillatory behavior but were structurally unstable~\cite{Deane1991pof}.
This was later resolved by Noack \textit{et al}.~\cite{noack2003hierarchy}, 
in which
an 8-mode POD basis was augmented with a ninth ``shift mode'' parameterizing the mean flow deformation.
This approach was later formalized with a perturbation analysis of the flow at the threshold of bifurcation~\cite{Sipp2007jfm}.
\begin{figure*}
    \centering
    \includegraphics[width=\textwidth]{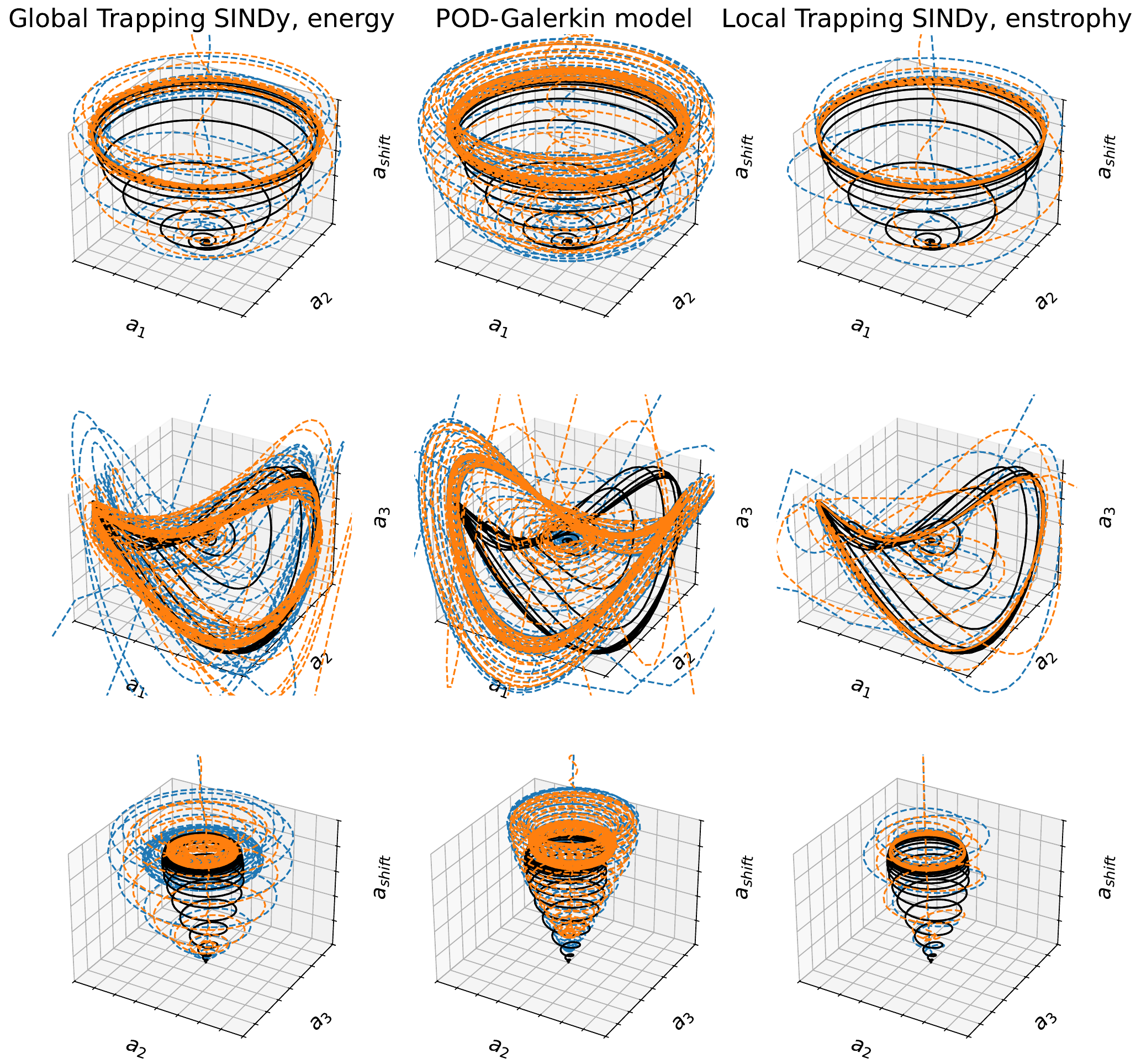}
    \caption{Five-dimensional models, integrated from three random initial conditions (color lines), for the von Karman street from the traditional trapping SINDy using the energy as a Lyapunov function, the analytic POD-Galerkin method, and the locally stable trapping SINDy using the enstrophy. The POD modes from the direct numerical simulation data were used for training and are shown in black lines. Each figure is plotted over the same axis dimensions and each row of the subplots shares the same combination of coordinates/modes.}
    \label{fig:von_karman_enstrophy}
\end{figure*}
%
The 9-mode quadratic Galerkin model from the Noack \textit{et al}. work 
accurately reproduces all of the key physical features of the vortex street. Moreover, in Schlegel and Noack~\cite{Schlegel2015} stability of the quadratic model was proven with $m_9 = m_\text{shift} = \epsilon$, $\epsilon > 1$, and $m_i = 0$ for $i = \{1,...,8\}$. \\
Recent work by Loiseau \textit{et al}.~\cite{loiseau_constrained_2018,loiseau2018sparse,loiseau2019pod} bypassed the Galerkin projection step by using the SINDy algorithm to directly identify the reduced-order dynamics.
This approach has been shown to yield compact, accurate models for low-dimensional systems ($r=2$ or $3$), but preserving accuracy and stability for higher-dimensional systems remains challenging. 
Higher-dimensional regression problems often become ill-conditioned; for instance, in the cylinder wake example, the higher modes 3-8 are essentially harmonics of the driving modes 1-2, and so it is difficult to distinguish between the various polynomials of these modes during regression. 
Because these higher harmonics are driven by modes 1-2, the 3D constrained quadratic SINDy model with modes 1-2 plus the shift mode from Loiseau \textit{et al}.~\cite{loiseau_constrained_2018} already performs well enough to capture the energy evolution with minor overshoot and correct long-time behavior.

In the original trapping SINDy work~\cite{kaptanoglu2021promoting}, new, provably globally stable models for the cylinder wake were found. In the present work, reusing the identical dataset described in that work, we extend these results by providing new locally-stable models, including one in which the enstrophy
\begin{equation}
    \bm P_\text{ens} = \begin{pmatrix}
       2.69 & 0.03 & 0.01 & 0 & -3.17 \\
         0.03 & 2.76 & 0 & 0 & 2.38\\
         0.01 & 0 & 10.64 & 0.08 & -0.06\\
         0 & 0 & 0.08 & 10.65 & -0.06\\
         -3.17 & 2.38 & -0.06 & -0.06 & 3
    \end{pmatrix}
\end{equation}
computed using the direct numerical simulation (DNS), is used as the Lyapunov matrix in Eq.~\eqref{eq:lyapunov_function}. Both trapping SINDy models show improved accuracy with respect to the POD-Galerkin model in recovering the true limit cycle dynamics in Fig.~\ref{fig:von_karman_enstrophy}.
\subsection{Lid-cavity flow}\label{sec:lid_cavity_flow}
\begin{figure*}
    \centering
    \includegraphics[width=\textwidth]{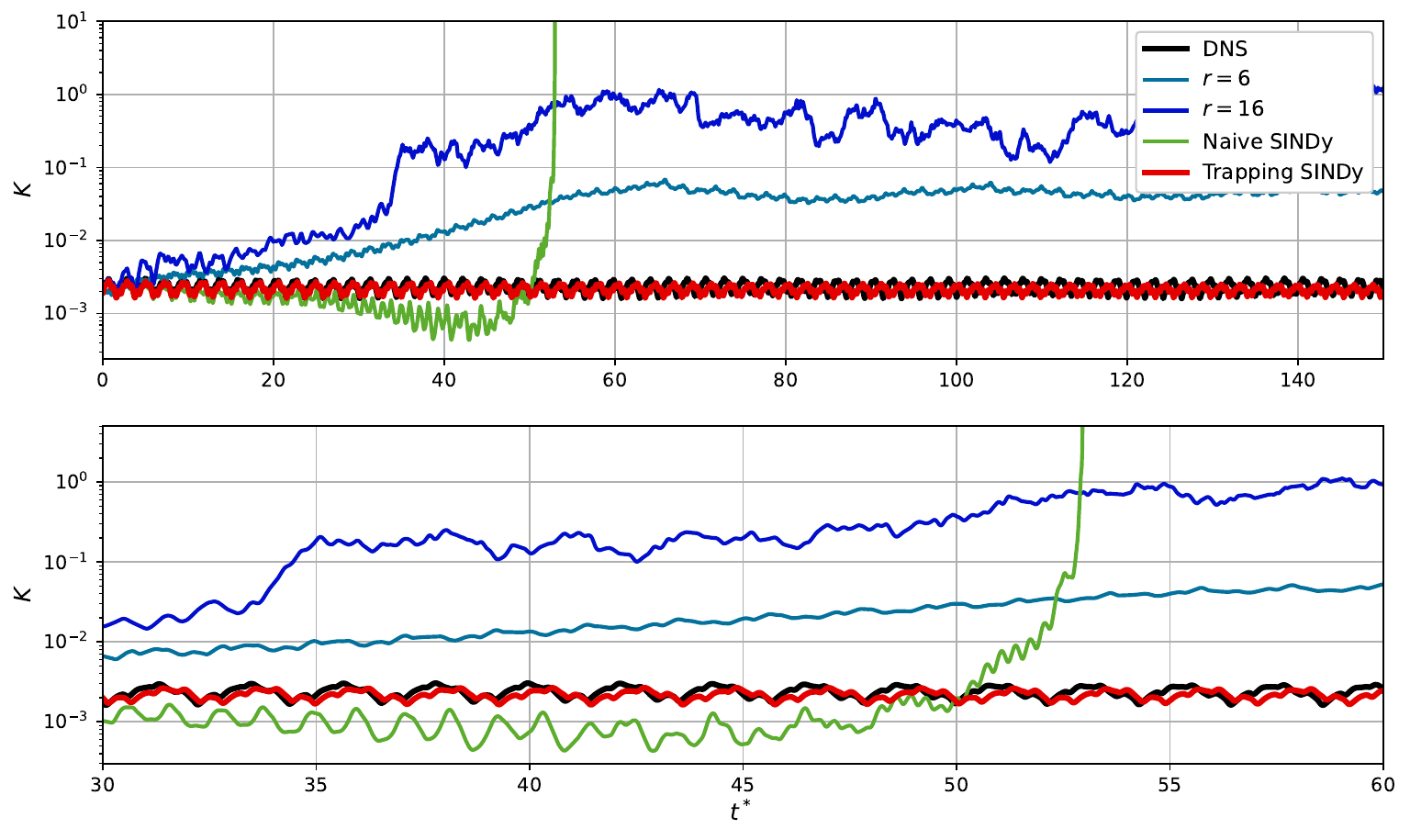}
    \caption{Lid-driven cavity, where $K$ denotes the total kinetic energy, and $t$ is the normalized time. Extended trapping SINDy and naive SINDy both use the first 6 POD modes. The lower subplots is a zoom-in of the upper subplots between time 30 to 60.}
    \label{fig:lid-cavity}
\end{figure*} 
In fluid mechanics, the lid-driven cavity is commonly used as a benchmark for testing numerical methods~\cite{kuhlmann2019lid}. A viscous fluid is in a confined geometry of a 2D rectangular container, and its motion is driven by the tangential moving of a bounding wall. The fluid motion is governed by incompressible Navier-Stokes equations. We then apply the model reduction method previously mentioned, resulting in a POD-Galerkin model. In this work, we select the first 6 POD modes to build a data-driven model.
The flow considered in the present work is the incompressible shear-driven cavity flow at $Re = 7500$. The computational domain and boundary conditions are the same as in Callaham \textit{et al}.~\cite{Callaham2022}. We also use the same data from DNS as in the work. Then the extended trapping SINDy algorithm is applied to compare with direct numerical simulation, 6-modes POD-Galerkin model, 16-modes POD-Galerkin model and a naive SINDy model. Since this system lacks lossless nonlinearities, applying the original trapping SINDy method for comparison would be inappropriate. As shown in Fig.~\ref{fig:lid-cavity}, the extended trapping SINDy model outperforms all other presented data-driven models and indicates strong stability compared to all others, even though the $\tilde{\bm A}^S$ from the extended trapping SINDy is not fully promoted to negative-definite. The naive SINDy model goes unstable immediately and POD-Galerkin models go unstable gradually, while the locally stable model captures the stable dynamics using the same or fewer number of modes. One may also notice the increasing phase shift between the DNS results and the trapping SINDy model, along with the reduction in total kinetic energy predicted by the trapping SINDy model. Both observations can likely be attributed to the same cause, which is that the trapping model takes only the first 6 POD modes to simulate the dynamics, and this truncation inevitably leads to a slight loss of energy. Additional results are provided in Appendix~\ref{appdx: power spectral}. The $\tilde{\bm A}^S$ has one positive eigenvalue which is similar as the one in the $50 \%$ noisy Lorenz example in Sec.~\ref{sec:noisy_Lorenz}. Further hyperparameter tuning might facilitate a fully provably stable model but we found empirically this model already works well for the lid cavity flow.

\section{\label{sec:conc}Conclusion}

Motivated by the nonlinear energy-preserving structure in multiple dynamical systems of fluids and plasmas, as well as the weakly nonlinear energy-preserving constraints in open-channel and many other important flows, the present work has considered the boundedness issue in quadratically nonlinear models and provided a new theorem with explicit bounds on the size of trapping regions for these models. An algorithm was proposed to identify such models from data based on this theorem.

In Sec.~\ref{sec:stability_analysis}, we analyzed the stability of linear-quadratic systems with weakly energy-preserving constraints on nonlinear terms. By adopting Lyapunov's direct method, a new theorem, which includes  sufficient conditions were proved for the existence of trapping regions in systems with such constraints and estimates were obtained of the size of these trapping regions. We also compared our theorem with Schlegel and Noack trapping theorem, which we showed that ours can exactly reproduce their trapping theorem.

In Sec.~\ref{sec:algorithm}, an extended trapping SINDy algorithm was proposed to promote local stability of data-driven quadratically nonlinear models. One of the guarantees on performance needed to use this algorithm is given by the local stability estimates from the theorem proposed in Sec.~\ref{sec:stability_analysis}. The algorithm performs well on tasks such as system identification and simulation with ROMs without requiring the nonlinearities of the dynamical system to be small to have these guarantees.
Successfully promoting the local stability of data-driven models requires a combination of well designed algorithms, methods of analysis with physical insight and interpretations, and local or global stability guarantees. The work of this paper was motivated, in part, by the idea that tools that extract models from data should guarantee the stability of models that capture the dominant behavior of bounded physical systems like fluids which can be described by Navier-Stokes equations. We observe many situations where the characteristics of the dynamics or stability/instability conditions are unclear in our presented example results and unpublished experiment, which includes that there exists one positive eigenvalue of $\tilde{\bm A}^S$ and that no stability guarantee is given by the Thm.~\ref{thm:trapping_radius} in some models but yet to find any unstable solution in neither of these situations. 

The robust stability despite weakly breaking the \textit{Hurwitz} property may be somewhat dynamically understood because when there is at least one positive eigenvalue the ellipsoid of positive energy growth, $\Omega_{\rho_-}$, transitions to a hyperboloid where the $\dot{v} > 0$ region stretches to infinity. Schlegel and Noack showed that generically this will be in conflict with global stability for any initial condition, although most trajectories in the $\dot{v} > 0$ region may remain bounded.

Indeed, generally the system dynamics can allow for trajectories starting in the $\dot{v} > 0$ region to move away from the origin, but then eventually cross the $\dot{v} = 0$ surface and move back towards the origin. Concluding definitively about local stability in this case is very challenging because of the strong dynamical dependence and this difficulty is in fact a motivation for the \textit{Hurwitz} assumption on $\tilde{\bm A}^S$ in order to write down a generic stability theorem. Empirically, the stability of these models appears relatively robust to a small positive eigenvalue in $\tilde{\bm A}^S$; it is only our stability \textit{estimates} that break down here. Nonetheless, if some eigenvalues of $\bm A^S$ are positive, and there are energy-preserving nonlinearities, it should be clear that $\dot{v} > 0$ is inevitable at distances far enough from the origin, implying instability for trajectories with initial condition in this area.

It roughly follows from this reasoning that if we promote models such that more and more eigenvalues of $\tilde{\bm A}^S$ become negative, this will contribute to the stability whether or not $\tilde{\bm A}^S$ is able to become fully \textit{Hurwitz}. This is precisely our empirical finding in both the local and global versions of the trapping SINDy algorithm. This leads to a number of possible directions for future work that would extend the research presented in this paper.

%
One direction in the future is to find the initial conditions that leads to an unstable solution that are still not clear in this work. Even though we have given sufficient conditions for attracting trapping regions to exist in quadratically nonlinear systems and conservative estimates about the size of the local stability domain, an unbounded trajectory has not been yet found when the \textit{Hurwitz} property is \textit{slightly} broken. The inability to find a $\bm m$ such that $\tilde{\bm A}^S$ is \textit{Hurwitz} can always be an issue of a suboptimal minima in the optimization problem, but dynamical explanations may be required too.


Future numerical work includes improving the strategy of enforcing a \textit{Hurwitz} matrix $\tilde{\bm A}^S$ during optimization. The current strategy to enforce a \textit{Hurwitz} matrix $\tilde{\bm A}^S$ is to solve $\bm \xi$ and $\bm m$ separately in each iteration so that each process is convex. However, the loss term for making a \textit{Hurwitz} $\tilde{\bm A}^S$ inevitably makes the loss function nonconvex. The convergence properties are unclear as well. A clever strategy for enforcing the \textit{Hurwitz} property was adopted recently in Goyal \textit{et al}.~\cite{goyal2023guaranteed}. Alternatively, with the help of modern powerful computational platforms, the nonconvex problems can be better solved by implementing this method in deep learning. So developing a deep neural network for this method as in Ouala \textit{et al}.~\cite{Ouala2023} is another promising line of future work.

Lastly, Kramer raises an optimization-based approach to enlarge the estimate of the stability region for a given Lyapunov matrix $\bm P$~\cite{kramer2021stability}. However, optimizing the Lyapunov matrix 
$\bm P$ could expand the stability radius $\rho_+$ as well. Thm.~\ref{thm:trapping_radius} suggests that tuning $\bm P$ to reshape $\tilde{\bm A}^S$’s eigenvalues may achieve this, as demonstrated in Sec.~\ref{subsec:extended_trapping}. Searching for an optimal $\bm P$ through data-driven methods as in Goyal \textit{et al}.~\cite{goyal2023guaranteed} and machine learning algorithms is promising to extend the algorithm in this work even further.

\begin{acknowledgments}
This work was supported by the National Science Foundation under Grant No. PHY-2108384 and National Science Foundation AI Institute in Dynamics Systems under Grant No. 2112085. 
\end{acknowledgments}

\section*{Data Availability Statement}
The data that support the findings of this work can be found in the open-source PySINDy package~\cite{deSilva2020,Kaptanoglu2022} where all examples studied in this work are implemented as well.

\section*{Author Declarations}
The authors have no conflicts to disclose.

\appendix
\section{Sensitivity to perturbations}\label{sec:sensitivity_analysis}
One very useful feature of our new local stability bounds in Theorem~\ref{thm:trapping_radius} is that we can provide estimates, for the case of quadratically nonlinear models, to a very interesting question in dynamical system identification from data; given the inevitable numerical and other errors incurred when identifying the coefficients of the right-hand-side of a dynamical equation from data, how large can the errors be in order to retain some large-scale (e.g. not just preserving the location of fixed points) sense of dynamical stability in the system?

Consider a machine-learned, quadratically nonlinear model $\dot x'_i$ from data, where, because of noise, nonoptimal optimization from hyperparameter or algorithm choices, or numerical approximation errors, there are small differences between the learned and true models 
\begin{align}
    \dot x_i - \dot x'_i = \delta C_i + \delta L_{ij}x_j + \delta Q_{ijk}x_jx_k.
    \end{align}
Now consider a system that is locally or globally bounded by exhibiting a monotonic trapping region as described in the present work. 
Without nonlinearity, or if the perturbation to the nonlinearity $\delta Q_{ijk}$ is somehow energy-preserving, then the requirement to retain local or global boundedness reduces to the simple requirement that (taking $\bm P = \bm I$ for clarity),
\begin{align}
(A^S)'_{jk}  &= L^S_{jk} - m_i Q_{ijk}  + \delta L^S_{jk} - m_i\delta Q_{ijk} \\ \notag &\equiv (A^S + \delta A^S)_{jk},
\end{align}
is \textit{Hurwitz}. Notice that $\bm A^S$ and the perturbation $\delta \bm A^S$ are necessarily real symmetric matrices that are diagonalized by orthogonal basis transformations. By assumption, there exists a $\bm m$ such that $\bm A^S$ is \textit{Hurwitz}, so that
\begin{align}
\bm y^T(\bm A^S)'\bm y &= \bm y^T\bm A^S\bm y + \bm y^T\delta \bm A^S\bm y \\ \notag &= \sum_{i=1}^r \lambda_i z_i^2 + \delta \lambda_i w_i^2 \leq \sum_{i=1}^r \lambda_1 z_i^2 + \delta \lambda_1 w_i^2,
\end{align}
where $\bm z = \bm V^T\bm y$ and $\bm w = \bm U^T\bm y$, with $\bm V$ and $\bm U$ denoting the orthogonal basis transformations. Therefore $\sum_i z_i^2 = \|\bm z\|_2^2 = \|\bm w\|_2^2 = \sum_iw_i^2$. Thus, 
\begin{align}
\bm y^T(\bm A^S)'\bm y \leq (\lambda_1 + \delta \lambda_1) \sum_{i=1}^r z_i^2 = (\lambda_1 + \delta \lambda_1) \|\bm y\|_2^2,
\end{align}
and since $\lambda_r < ... < \lambda_1 < 0$, it follows that a sufficient condition for being Hurwitz is that $\delta \lambda_1 > -\lambda_1$. This result is well known and merely restates that the perturbed $(\bm A^S)'$ matrix must remain \textit{Hurwitz}.

Next, we analyze the general situation where the perturbed nonlinear terms $\delta Q_{ijk}$ break the energy-preserving symmetry, as in the main body of the present work. It is a necessary condition that the perturbation $\bm A^S + \delta \bm A^S$ is still \textit{Hurwitz}, as described in the previous paragraph. Then the condition for the ``shell of stability'' to vanish in Theorem~\ref{thm:trapping_radius} is modified to, 
\begin{align} \notag
\epsilon_Q &= \|Q_{ijk} + Q_{jik} + Q_{kij} + \delta Q_{ijk} + \delta Q_{jik} + \delta Q_{kij}\|_F \\ &= \frac{3\lambda_1^2(\bm A^S + \delta \bm A^S)}{4\|\bm d + \delta \bm d\|_2}.
\end{align}
Note that the system may still have local stability -- it is our stability \textit{estimate} that breaks down. Nonetheless, this is a reasonable estimate for the identified system to break the dynamical stability of the true physical system.

This estimate allows us to roughly categorize quadratically nonlinear dynamical systems into three types: (1) systems very close to a stability threshold $\lambda_1 \approx 0$ that likely require explicit enforcement of the \textit{Hurwitz} constraint during optimization because of this sensitivity, (2) systems not close to the $\lambda_1 \approx 0$ stability threshold,  but close to the nonlinear threshold related to $\epsilon_Q$, where loss of local stability (estimates) may easily occur, and (3) systems which are robust to general perturbations in $\delta \bm A^S$ and $\delta \bm Q$. It turns out that many low-dimensional systems are relatively sensitive with regards to the local stability estimates in this work. For instance, an identified Lorenz model on noisy data might exhibit $\lambda \approx -1$, $\|\bm d + \delta \bm d\|_2 \approx \|\bm d\|_2 = 100$ and therefore a modest perturbation,
$$\epsilon_Q = \|\delta Q_{ijk} + \delta Q_{jik} + \delta Q_{kij}\|_F \approx \frac{3}{400} = 0.0075,$$
can break the local trapping region stability estimate. With respect to optimization errors or errors from noisy data, this is quite a small perturbation! Nonetheless, we stress again that, empirically, we often find robust local stability even where our stability \textit{estimates} are modestly broken. 

Moreover, this example shows that explicit estimates can be made of the sensitivity of a dynamical property (a local trapping region) to errors in nonlinear system identification, and provides evidence that generically the problem of correctly capturing the dynamical stability during data-driven model identification is \textit{ill-conditioned} for many dynamical systems, i.e. small perturbations in the coefficients make big changes in the dynamical stability considered here. Of course, we are always free to \textit{impose} local stability guarantees, whether or not the underlying data is compatible with the trapping theorems, since this can be seen as a form of model regularization. Then it becomes the role of the optimization to attempt to find a solution that both fits the data well and is compatible with the trapping theorems.
\begin{table*}[!t]
\centering
\caption{Example models discovered by the Extended Trapping SINDy algorithm compared to the true governing ODEs.}
{\begin{tabular*}{\linewidth}{ l|c|c }
\hline
\hline
Dynamic system  & True ODE &  Identified ODE  \\ \hline
Lorenz Attractor  & \multirow{3}{*}{\raisebox{-1.5\height}{\centering
    $\begin{aligned}
    \dot x &= 10(y-x)\\
    \dot y &= x(28 - z) -y\\
    \dot z &= xy - \frac{8}{3}z
\end{aligned}$%
}} & $\begin{aligned}
    \dot x &= 0.06 - 9.87 x + 9.94 y -0.01 z\\
\dot y &= -0.24 + 27.76 x -0.92 y + 0.03 z -0.99 x z\\
\dot z &= 0.08 -0.01 y -2.66 z + 0.99 x y
\end{aligned}$      \\\cline{1-1} \cline{3-3}
10\% noisy Lorenz & & $\begin{aligned}
    \dot x &= -0.01 + 3.67 x + 6.00 y -0.19 z -0.26 x z -0.04 y z\\
\dot y &= -0.01+ 5.55 x + 8.25 y + 0.22 z  -0.36 x z -0.23 y z\\
\dot z &= -0.06 -0.46 x + 0.32 y -2.63 z + 0.26 x^2 + 0.40 x y + 0.23 y^2 
\end{aligned} $      \\\cline{1-1} \cline{3-3}
50\% noisy Lorenz & & $\begin{aligned}
    \dot x &= -0.07 + 0.13 x + 3.57 y -1.23 z + 0.02 x y -0.02 x z -0.02 y^2 -0.07 y z + 0.03 z^2\\
\dot y &= 0.07 -0.37 x + 0.82 y + 0.95 z -0.02 x^2 + 0.02 xy -0.07 x z -0.02 y z -0.02 z^2\\
\dot z &= -0.01 + 1.38 x -1.41 y -0.41 z + 0.02 x^2 + 0.14 x y -0.03 x z + 0.02 y^2 + 0.02 y z
\end{aligned}$         \\ \hline
Finance           & $\begin{aligned}
    \dot x &= z + \left (y + 4.999 \right)x,\\
        \dot y &= 1 - 0.2 y - x^2, \\
        \dot z &= -x - 1.1 z,
\end{aligned}$ & $\begin{aligned}
    \dot x &=  4.93 x + 0.96 z + 0.99 x y - 0.01 y z \\
\dot y &= -0.01  - 0.20 y  - 0.99 x^2  + 0.01 x z \\
\dot z &= 0.01  - 1.02 x + 0.01 y  - 1.11 z\\
\end{aligned}$  \\ \hline
Hadley        & $\begin{aligned}
    \dot x &= -y^2 - z^2 - 0.2x + 1.8,\\
        \dot y &= xy - 4xz - y + 1, \\
        \dot z &= 4xy + xz - z,
\end{aligned}$ & $\begin{aligned}
    \dot x &= 1.80  - 0.20 x - 1.00 y^2  - 1.00 z^2\\
\dot y &= 1.00  - 1.00 y  - 0.01 z + 1.00 x y  - 3.99 x z \\
\dot z &=  0.01 y  - 1.00 z + 3.99 x y + 1.00 x z \\
\end{aligned}$   \\ \hline
LorenzStenflo & $\begin{aligned}
   \dot x &= 2(y-x) + 1.5w,\\
        \dot y &= 26x - xz - y, \\
        \dot z &= xy - 0.7z,\\
        \dot w &= -x - 2 w,
\end{aligned}$ & $\begin{aligned}
    \dot x &= 0.03  - 1.77 x + 1.95 y + 1.55 w  - 0.01 x z- 0.02 w^2\\
\dot y &= 0.51 + 25.41 x  - 0.90 y  - 0.05 z  - 0.46 w - 0.98 x z + 0.01 x w + 0.02 z w + 0.01 w^2\\
\dot z &= 0.01 + 0.01 x + 0.02 y  - 0.70 z + 0.01 x^2 + 0.98 x y  - 0.01 x w  - 0.02 y w  - 0.02 w^2\\
\dot w &= 0.03  - 1.06 x  - 0.01 y  - 0.01 z  - 2.37 w + 0.02 x w  - 0.01 y w + 0.02 z w\\
\end{aligned}$   \\ \hline
VallisElNino  & $\begin{aligned}
    \dot x &= 102y - 3x,\\
        \dot y &= -y + xz, \\
        \dot z &= -z - xy + 1,
\end{aligned}$ &  $\begin{aligned}
    \dot x &= -2.99 x + 101.64 y \\
    \dot y &= - 0.963 y + 0.01 z  + 0.99 x z + 0.03 y z  - 0.03 z^2\\
    \dot z &= 1.01  - 0.01 y  - 1.02 z  -0.99 x y - 0.03 y^2 + 0.03 y z\\
\end{aligned}$ \\ \hline \hline
\end{tabular*}}
\label{tab:add_results}
\end{table*}
\begin{figure*}[btp]
    \centering
    \includegraphics[width=\textwidth]{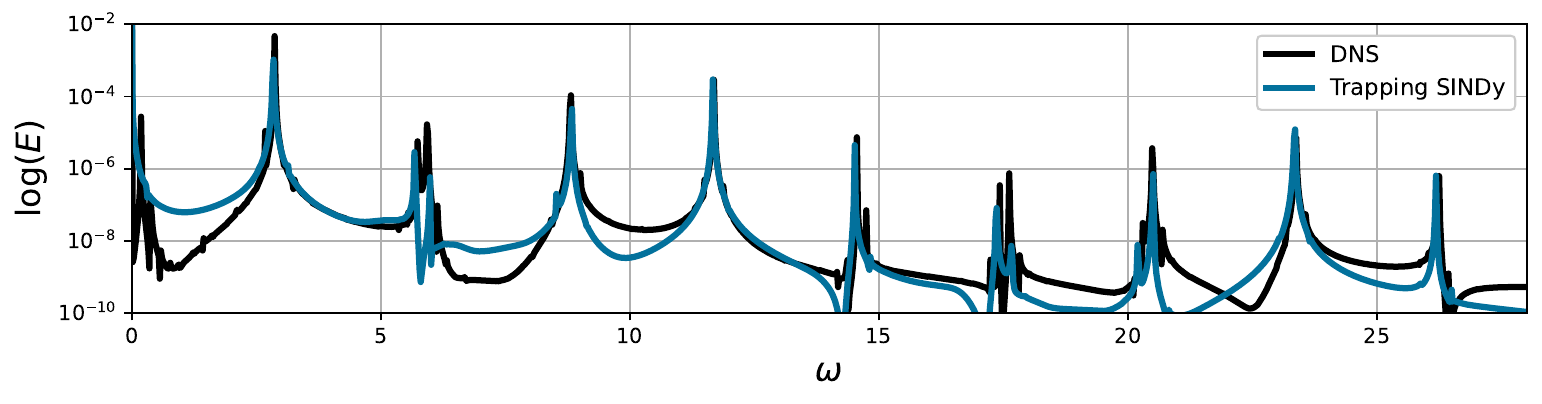}
    \caption{Power spectral density of the lid-driven cavity.}
    \label{fig:lid-cavity_psd}
\end{figure*} 

\section{Supplementary results}
\subsection{Models discovered by Extended Trapping SINDy}
We here show some models discovered by extended trapping SINDy algorithm in Table~\ref{tab:add_results}. Relevant terms are italicized for each equation. For readability, we present models discovered by the extended trapping SINDy algorithm with coefficients rounded to two decimal places (instead of the default omission of terms below 0.001).  One may notice that some models are generally non-sparse due to the fundamental challenge of handling noisy data.
\subsection{Power spectral density of the lid-driven cavity}
\label{appdx: power spectral}
We also check that if trapping SINDy model reproduces the power spectral density of the DNS data. Using FFT, we achieve basic power spectral density estimate as shown in Fig.\ref{fig:lid-cavity_psd} and the proposed method is able to capture the most significant frequencies of the lid-driven cavity. 

\onecolumngrid
\bibliography{sindy_stability}

\end{document}